\documentclass[aps,prb,twocolumn,superscriptaddress]{revtex4-2}
\usepackage{amsmath}    
\newcommand{\CP}{\mathbb{CP}}
\usepackage{amsfonts}
\usepackage{amsthm}
\usepackage{amssymb}
\usepackage[colorlinks=true,citecolor=blue,linkcolor=red]{hyperref}
\usepackage{graphicx}
\usepackage{bbold}					% for \mathbb{1} as a nice unit matrix symbol
\usepackage[dvipsnames]{xcolor}
\usepackage{xcolor} % allows coloring text
\usepackage{bm} % for bold math symbols

\usepackage{times}
\usepackage{bm}

\usepackage{dcolumn}
\usepackage{mathtools}

\usepackage{array,booktabs,ragged2e}	
\usepackage{soul}						% strike out text using \st{}
\usepackage{cancel}						% detto, but more clearly visible when reading
\usepackage{tabularx}% for table in the CHS-model discussion
\usepackage{multirow}
\usepackage{hhline}
\usepackage{adjustbox}
\usepackage{sidecap}
\usepackage{threeparttable}
\usepackage{colortbl} %added by JX, for cellcolor in tables.
\usepackage{tabularx}
\usepackage[utf8]{inputenc}
\usepackage[T1]{fontenc}

\usepackage{amsmath, amssymb, bm}

\newcommand{\eq}[1]{\begin{equation}#1\end{equation}}
\newcommand{\BZ}{\ensuremath{\mathrm{BZ}}}   % Brillouin zone
\DeclareMathOperator{\Tr}{Tr}   % trace operator

\def\bra#1{\left<{#1}\right|}					% "bra"-state
\def\ket#1{\left|{#1}\right>}					% "ket"-state
\def\braket#1#2{\left<{#1}|{#2}\right>}			% "bra-ket"-product
\newcolumntype{P}[1]{>{\raggedleft\arraybackslash}p{#1}}
\newcolumntype{R}[1]{>{\centering\arraybackslash}p{#1}}

\usepackage{bm,color,amsmath,amssymb,mathrsfs,latexsym,graphicx,psfrag,mathtools}
\usepackage{color}
\usepackage[dvipsnames]{xcolor}

\usepackage{empheq}%\begin{empheq}[box=\fbox]{align} ... \end{empheq}

\newcommand{\bsl}[1]{{\boldsymbol{#1}}}

\newcommand{\ii}{\mathrm{i}}
\newcommand{\dd}{\mathrm{d}}
\newcommand{\vol}{\mathop{\mathrm{vol}}}

\newcommand{\dsZ}{\mathbb{Z}}

\newcommand{\dsR}{\mathbb{R}}

\newcommand{\dsC}{\mathbb{C}}

\newcommand{\Gr}{\mathop{\mathrm{Gr}}}

\renewcommand{\Im}{\mathop{\mathrm{Im}}}

\newcommand{\refcite}[1]{Ref.\cite{#1}}

\newcommand{\mat}[1]{\left(\begin{matrix}#1\end{matrix}\right)}

\newcommand{\eqa}[1]{\begin{align}\begin{split} #1 \end{split}\end{align}}

\let\oldAA\AA
\renewcommand{\AA}{\text{\normalfont\oldAA}}

\newcommand{\ie}{{\emph{i.e.}}}
\newcommand{\eg}{{\emph{e.g.}}}

\newcommand{\diag}{\text{diag}}

\newcommand{\Ch}{\text{Ch}}

\newtheorem{theorem}{Theorem}

\usepackage{comment}

\usepackage{cleveref}
\crefname{appendix}{App.}{Apps.}
\Crefname{appendix}{Appendix}{Appendices}
\crefname{equation}{Eq.}{Eqs.}
\Crefname{equation}{Equation}{Equations}
\Crefname{figure}{Figure}{Figures}
\crefname{figure}{Fig.}{Figs.}
\crefname{theorem}{Thm.}{Thms.}
\Crefname{theorem}{Theorem}{Theorems}
\crefname{table}{Tab.}{Tabs.}
\Crefname{section}{Section}{Sections}
\crefname{section}{Sec.}{Secs.}
\creflabelformat{appendix}{#2#1#3}

\newcommand{\norm}[1]{\left\lVert#1\right\rVert}

\begin{document}

\title{Ideal Bands in Tight-Binding Models}

\author{Lexu Zhao}
\thanks{These authors contributed equally to this work.}
\affiliation{Department of Physics and Quantum Theory Project, University of Florida, Gainesville, Florida 32611, USA}

\author{Yang Ge}
\thanks{These authors contributed equally to this work.}
\affiliation{Department of Physics and Quantum Theory Project, University of Florida, Gainesville, Florida 32611, USA}

\author{Shinsei Ryu}
\affiliation{Department of Physics, Princeton University, Princeton, New Jersey 08544, USA}

\author{Jiabin Yu}
\email{yujiabin@ufl.edu}
\affiliation{Department of Physics and Quantum Theory Project, University of Florida, Gainesville, Florida 32611, USA}

\begin{abstract}
A band is called ideal when its Dirichlet functional saturates the topological lower bound.
We study ideal bands in finite-band tight-binding models with conventional two-dimensional lattice translation symmetries, allowing the bands to have non-flat dispersion.
We first provide an analytic construction of isolated Chern-ideal bands with Chern number $|\Ch|=1$ in finite-band models with exponentially decaying hopping.
This construction applies only when at least two orbitals have different embedded positions (modulo lattice vectors), complementing the previously known construction for $|\Ch|>1$.
We then show that isolated Chern-ideal bands with any nonzero Chern number cannot exist in finite-band models with finite-range hopping, regardless of the embedded orbital positions.
The conclusion holds even if there are isolated band touching points, as long as the Berry curvature does not diverge anywhere in the Brillouin zone.
We finally generalize the conclusions to Wilson-loop-ideal bands with zero total Chern number, such as Kane-Mele $\dsZ_2$-ideal bands.
\end{abstract}

\maketitle

\section{Introduction}
The quantum metric is a classic quantum-geometric quantity that captures how fast the periodic part of Bloch states changes in the first Brillouin zone (BZ)~\cite{Liu_2024,Yu2025QGReview,gao2025quantumgeometryphenomenacondensed,Jiang_2025,Verma_2026,kitamura2026quantumgeometrycorrelatedelectron,Peotta2025QGSuperfluidity,cano2026idealquantumgeometryfractional,Yang_2025_QGMetrology}.
When the Dirichlet functional of a band---which, in our setting, reduces to the integral of the trace of the quantum metric---saturates the topological lower bound, the corresponding Bloch states acquire special properties~\cite{Laughlin1983FQHE,Haldane1983FQH,Jian_2012,Roy2014QGChernBound,claassen2015position,Lee_2017,Ledwith_2020,Ozawa_2021,Mera_2021,Jie2021IdealBands,Mera2021FlatBandsKahler,Wang_2022,Ledwith_2022,northe2022interplay,Wang_2023,Valentin2023IdealBands,Parker2023IdealBands,Dong_2023_ideal_Higher_Chern,BJY2024EulerBoundQG,liu2024theorygeneralizedlandaulevels,ji2024quantum,Okuma_2024_VortexFunctions,Roy2024FCINonLL,Queiroz_2024_TMD_Chiral,Oreg2025UnifyingFrameworkFCI,Yu_2025_WL_ideal,Sun2025IdealChern,Liu_2025_idealTHF,Fukui_2025,Fujimoto_2025}.
The most classic example is the lowest Landau level, on which Laughlin constructed an analytical wavefunction~\cite{Laughlin1983FQHE} for fractional quantum Hall~\cite{Tsui1982FQHE} states through vortex attachment, \ie, by multiplying factors of $(z_i-z_j)$, where $z_i$ is the complex coordinate of the $i$th particle.
This many-body construction was later generalized to ideal bands, or, more precisely, Chern-ideal bands~\cite{Jie2021IdealBands,Ledwith_2022,Wang_2022,Parker2023IdealBands,Valentin2023IdealBands,Wang_2023,Dong_2023_ideal_Higher_Chern}---bands with a quantum metric saturating the lower bound set by the Chern number~\cite{Roy2014QGChernBound,Bellissard1994QGChernBound}, providing the construction of the wavefunction for fractional Chern insulators (FCIs)~\cite{neupert, sheng, regnault,Sun2011,Tang11}.
Recently, \refcite{Yu_2025_WL_ideal} further generalized the concept of ideality to Wilson-loop-ideal bands, whose quantum metric saturates the Wilson-loop (WL) lower bound~\cite{Yu2025Z2bound}.
Two examples of WL-ideal bands with zero total Chern number are the ideal bands with a nontrivial time-reversal-protected Kane-Mele $\dsZ_2$ topology~\cite{Kane2005Z2,Zhang2006QSH,Kane2005QSH,Bernevig2006BHZ} or nontrivial inversion-protected fragile topology~\cite{song2020}.
Both of them can be unitarily transformed into a pair of Chern-ideal states with opposite Chern numbers. Attaching vortices to them produces a pair of FCIs with opposite Hall conductances (such as fractional topological insulators~\cite{Zhang2006QSH,Levin_Stern,Neupert2011FTI,Stern2015review,Neupert_2015} in the $\mathbb{Z}_2$-ideal case), which is still topologically ordered as FCIs are non-invertible~\cite{Cheng_2026_Ordering_TopoOrder}.

Previous studies of ideal bands have mainly focused on continuum settings, where the underlying models contain an infinite number of bands. 
In this work, we focus on ideal bands in tight-binding models with a finite number of bands (\ie, a finite number of degrees of freedom per unit cell).
Note that we do not require the bands to have flat dispersions.
We also only require the models to have ordinary lattice translations rather than magnetic translations, thus differing from the Kapit-Mueller construction~\cite{Kapit_2010}.
Previously, it was shown that when all orbitals have the same embedded positions (modulo lattice vectors), one can only construct the isolated Chern-ideal bands with Chern numbers $|\Ch|>1$ in a finite-band model with exponentially decaying hopping~\cite{Jian_2012,claassen2015position, Lee_2017,Mera_2021}.
Then, is it possible to construct isolated $|\Ch|=1$ Chern-ideal bands in a finite-band model with exponentially decaying hopping when not all orbitals have the same embedded positions (modulo lattice vectors)?
Furthermore, is it possible to construct a Chern-ideal band with any nonzero Chern number in a finite-band model with finite-range hopping?
The corresponding questions also remain unanswered for other WL-ideal bands.
 
In this work, we provide the answers to these two questions.
We first show that, as long as at least two orbitals have different embedded positions (modulo lattice vectors), one can explicitly construct $|\Ch|=1$ Chern-ideal bands in finite-band models with exponentially decaying hopping, using Jacobi theta functions.
Then we find that, regardless of the embedded positions, isolated Chern-ideal bands with nonzero Chern number are not allowed in any finite-band model with finite-range hopping.
Even if the Chern-ideal band is allowed to touch other bands at finitely many isolated momentum points, it still does not exist in finite-band models with finite-range hopping, as long as the Berry curvature does not diverge (\eg, the critical topological bands proposed in \refcite{li2026stabletopologyexactlyflat,liu2026theorytopologicaltopologicalflatbands}). 
The conclusions are directly applied to any isolated set of two WL-ideal bands with zero total Chern number, nontrivial normal WL winding, and non-singular non-Abelian Berry curvature, owing to their unitary equivalence to Chern-ideal bands.
Here ``normal'' means that the WL path is along a primitive reciprocal lattice vector.

\section{Setup}
We first present the setup on which our analysis will be built.
We consider a two-dimensional (2D) Bravais lattice $\mathcal L$ with
lattice vectors $\bsl R\in\mathcal L$ and $N$ internal degrees of
freedom per unit cell, labeled $\alpha=1,\ldots,N$.
We refer to these internal degrees of freedom collectively as
``orbitals'', although they may also represent spins, valleys, or sublattices. 
The real-space basis states are denoted by
$\ket{\bsl R,\alpha}$, and the physical position of orbital $\alpha$
in the unit cell $\bsl R$ is
$\bsl R+\bsl\tau_\alpha$.
The tight-binding Hamiltonian we consider has the conventional
lattice-translation symmetry, and thus has the general form:
\eq{
H = \sum_{\bsl R,\bsl R'\in\mathcal L}
\sum_{\alpha,\alpha'=1}^{N}
\left[
t_{\bsl R-\bsl R'}
\right]_{\alpha\alpha'}
\ket{\bsl R,\alpha}
\bra{\bsl R',\alpha'}.
\label{eq:real_space_hamiltonian_main}
}
Here
$\left[t_{\bsl R-\bsl R'}\right]_{\alpha\alpha'}$
is the hopping amplitude from orbital $\alpha'$ in unit cell
$\bsl R'$ to orbital $\alpha$ in unit cell $\bsl R$. To define the momentum-space orbital basis, we introduce embedded
positions $\{\bsl\xi_\alpha\}$ and set
\eq{
\ket{\bsl k,\alpha}
=
\frac{1}{\sqrt{N_k}}
\sum_{\bsl R\in\mathcal L}
e^{\ii
\left(
\bsl R+\bsl\xi_\alpha
\right)\cdot\bsl k}
\ket{\bsl R,\alpha},
\label{eq:embedded_fourier_transform_main}
}
where $N_k$ is the number of unit cells.
The periodic and physical embeddings correspond to the choices $\bsl{\xi}_\alpha=\bsl{0}$ and $\bsl{\xi}_\alpha=\bsl{\tau}_\alpha$, respectively \cite{haldane2014attachmentsurfacefermiarcs,Lim_2015,Simon_2020,Fuchs_2021,Huhtinen_2022,PhysRevLett.133.246603}.
Our discussion is formulated for a general embedding $\bsl{\xi}_\alpha$, and therefore automatically includes both cases. 
In this basis, the $N\times N$ Bloch Hamiltonian has matrix elements
\eq{
\left[
h(\bsl k)
\right]_{\alpha\alpha'}
=
\sum_{\bsl R\in\mathcal L}
\left[
t_{\bsl R}
\right]_{\alpha\alpha'}
e^{-\ii\bsl k\cdot
\left(
\bsl R
+
\bsl\xi_\alpha
-
\bsl\xi_{\alpha'}
\right)}.
\label{eq: bloch_hamiltonian_embedding_main}
}
Thus, the Fourier phase associated with the hopping element
$\left[
t_{\bsl R}
\right]_{\alpha\alpha'}$ is determined by the embedded
displacement $\bsl\rho
=
\bsl R
+
\bsl\xi_\alpha
-
\bsl\xi_{\alpha'}$. 
For a reciprocal-lattice vector $\bsl G$, we have
\eq{
\ket{\bsl k+\bsl G,\alpha}
=
\ket{\bsl k,\alpha} e^{\ii\bsl G\cdot\bsl\xi_\alpha}.
}
Consequently, the domain of $h$ is exactly extended to the whole $\dsR^2$ with 
\eq{
h(\bsl k+\bsl G) =
V_{\bsl G}
h(\bsl k)
V_{\bsl G}^{\dagger}
\label{eq:bloch_hamiltonian_covariance_main}
}
with $V_{\bsl G}
= \operatorname{diag}
\left(
e^{-\ii\bsl G\cdot\bsl\xi_1},
\ldots,
e^{-\ii\bsl G\cdot\bsl\xi_N}
\right)$.

Given a band $E_{\bsl{k}}$, we define its projector as
\eq{
P_{\bsl{k}} = U_{\bsl{k}} U^\dagger_{\bsl{k}}
}
with $U_{\bsl{k}}$ the eigenvector for $E_{\bsl{k}}$
\eq{
h(\bsl{k})U_{\bsl{k}} = E_{\bsl{k}} U_{\bsl{k}}\ .
}
The quantum metric~\cite{Fubini1904,Study1905} and Berry curvature~\cite{TKNN} are defined, respectively, as 
\eqa{
& \left[g_{\bsl{k}}\right]_{ij}
=
\frac{1}{2}\Tr\left[
\partial_{k_i}P_{\bsl{k}}
\partial_{k_j}P_{\bsl{k}}
\right]\ ,
\\
& F_{\bsl{k}}
=
\ii\Tr\left[
P_{\bsl{k}}
\left[
\partial_{k_x}P_{\bsl{k}},
\partial_{k_y}P_{\bsl{k}}
\right]
\right]\ .
\label{eq:quantum_metric_berry_curvature}
}
We always choose $E_{\bsl{k}} = E_{\bsl{k}+\bsl{G}}$ with 
\eq{
\label{eq:physical_boundary_main}
U_{\bsl{k}+\bsl{G}} = V_{\bsl{G}}U_{\bsl{k}} e^{\ii \phi_{\bsl{k},\bsl{G}}}
}
with $\phi_{\bsl{k},\bsl{G}}\in \dsR$, leading to 
\eq{
P_{\bsl{k}+\bsl{G}} = V_{\bsl{G}}P_{\bsl{k}}V_{\bsl{G}}^\dagger \ ,\ g_{\bsl{k}+\bsl{G}} = g_{\bsl{k}} \ ,\ F_{\bsl{k}+\bsl{G}} = F_{\bsl{k}} \ .
}
The condition is naturally satisfied when $E_{\bsl{k}}$ is isolated throughout the $\BZ$.

A band $E_{\bsl{k}}$ is called Chern-ideal \cite{Parker2023IdealBands,Jie2021IdealBands}
if 
\eq{
\frac{1}{2\pi}\int_{\BZ} d^2 k \Tr[g_{\bsl{k}}] = |\Ch| \equiv \left| \frac{1}{2\pi}\int_{\BZ} d^2 k F_{\bsl{k}}\right|\ ,
}
where we note that $\int_{\BZ} d^2 k \Tr[g_{\bsl{k}}]$ is the simplified form of the Dirichlet functional associated with the flat metric on the BZ torus inherited from $\dsR^2$~\cite{Yu_2025_WL_ideal}. (See also \cref{app:quantum-geometry}.)
As $\Ch<0$ can be related to $\Ch>0$ by a time-reversal operation on the wavefunction, we henceforth only consider 
\eq{
\Ch>0\ .
}
In this case, the Chern-ideal condition is equivalent to
\eq{
\Tr[g_{\bsl{k}}] = F_{\bsl{k}}\ \quad \forall\bsl{k}\in\BZ,
}
as $\Tr[g_{\bsl{k}}]$ is always nonnegative. 

When the embedded positions $\bsl{\xi}_{\alpha}$ change, the integrated quantum metric can change (especially when the shift is not uniform for all orbitals) while the Chern number remains unchanged.
Therefore, a band can be ideal in one embedding but not ideal in another, as discussed in the next section.

\section{Chern-Ideal Bands with $\Ch=1$ in Models with Exponentially Decaying Hopping}
\label{sec:Chern-Ideal Bloch States in the Finite-Orbital Hilbert Space}

We first discuss isolated Chern-ideal bands in tight-binding models with exponentially decaying hopping, $\ie$, elements of $t_{\bsl{R}}$ in \cref{eq: bloch_hamiltonian_embedding_main} have amplitudes decaying exponentially in $|\bsl{R}|$.
In this type of models, the projector $P_{\bsl{k}}$ of the isolated band $E_{\bsl{k}}$ is real-analytic in $\bsl{k}$ over the whole $\dsR^2$ \cite{Kuchment_2008,Kuchment_2016} (see \cref{app: basics} for details).
As a result, the quantum metric $g_{\bsl{k}}$ and the Berry curvature $F_{\bsl{k}}$ are real-analytic over the whole $\dsR^2$.

Given an isolated Chern-ideal band, its projector satisfies
\cite{Jian_2012,claassen2015position,Ledwith_2020,Jie2021IdealBands,Mera_2021,Ledwith_2022,Parker2023IdealBands}
\eq{
(1-P_{\bsl{k}})(\partial_{k_x} - \ii \partial_{k_y}) P_{\bsl{k}}  = 0\ ,
\label{eq:chern-ideal-projector}
}
leading to 
\eq{
(\partial_{k_x} - \ii \partial_{k_y}) U_{\bsl{k}} = \lambda_{\bsl{k}} U_{\bsl{k}}
\label{eq:chern-ideal-Uk-lambda_k}
}
with $\lambda_{\bsl{k}}$ a nonzero scalar.
Then, let $U_{\bsl{k},a}$ be a nonzero component of $U_{\bsl{k}}$ at $\bsl{k}$, we have
\eq{
(\partial_{k_x} - \ii \partial_{k_y}) \left[\frac{U_{\bsl{k}}}{U_{\bsl{k},a}}\right] =0\ 
\label{eq:Uk_over_Uka}
}
at $\bsl{k}$ (more precisely, in a small neighborhood of $\bsl{k}$).
Let $k=k_x-\ii k_y$ and $w(k)=U_{\bsl{k}}/U_{\bsl{k},a}$. Such a $w(k)$ is holomorphic at $\bsl{k}$~\cite{Jian_2012,claassen2015position,Lee_2017,Mera_2021,Parker2023IdealBands,cano2026idealquantumgeometryfractional}.
Therefore, the eigenvector is locally holomorphic (up to a factor) at any $\bsl{k}$.

In fact, we can fix the choice of the component $a$ for all $\bsl{k}\in \dsR^2$, and the resulting $w(k)$ is holomorphic everywhere except where $U_{\bsl{k},a}=0$.
Owing to the local holomorphicity (up to a factor) everywhere in the $\BZ$, the zeros of $U_{\bsl{k},a}$ can only be isolated points.
(See \cref{sec:exp} for details.)
Therefore, the constructed $w(k)$ is meromorphic throughout $\dsC$, and the construction of the ansatz is eventually reduced to finding the meromorphic $w(k)$.
In the end, the construction will produce a real-analytic projector
$P_{\bsl{k}}$.
Once such a projector is constructed, it defines a parent tight-binding model with exponentially decaying hopping, as discussed in \cref{app: basics}.

We now provide an explicit construction of an isolated Chern-ideal band with $\Ch=1$.
Consider a two-band model with two orbitals, $A$ and $B$, at different embedded positions, such that $\Delta\boldsymbol\xi\equiv\boldsymbol\xi_B-\boldsymbol\xi_A\notin\mathcal L.$ 
As discussed above, we can define the following meromorphic vector
\eq{w(k)=\mat{ 1 \\ \frac{U_{\bsl{k},B}}{U_{\bsl{k},A}} }\equiv\mat{ 1 \\ w_B(k) }. \label{eq:w_k}}
Based on \cref{eq:physical_boundary_main}, $w(k)$ observes the condition in momentum space
\eq{
w(k+G_x-\ii G_y) = \diag(1,e^{-i\bsl G\cdot\Delta\boldsymbol\xi})w(k).
\label{eq:WB_quasiperiodic}
}
To simplify this relation, we introduce the normalized complex momentum coordinate $z=\frac{k_x-ik_y}{b_{1x}-ib_{1y}}$ and $\omega=\frac{b_{2x}-ib_{2y}}{b_{1x}-ib_{1y}}$, where $\bsl{b}_1$ and $\bsl{b}_2$ are the primitive reciprocal lattice vectors.
We choose $\bsl{b}_1$ and $\bsl{b}_2$ such that $\operatorname{Im}\omega>0$, as such choices always exist (\cref{subsub: construction for f at diff positions}).
Then, we can define
\eq{
\label{eq:f_ratio_main}
f(z) = w_B[z (b_{1x}-ib_{1y})]\ ,
}
which simplifies \cref{eq:WB_quasiperiodic} to 
\eq{
f(z+1)=e^{-2\pi i\nu_1}f(z),
\quad f(z+\omega)=e^{-2\pi i\nu_2}f(z)\ ,
\label{eq:f_quasiperiodic}
}
where $\Delta\boldsymbol\xi=\nu_1\bsl a_1+\nu_2\bsl a_2(\mathrm{mod} \mathcal L)$ with $\nu_1,\nu_2\in[0,1)$ and $(\nu_1,\nu_2)\neq(0,0)$, and $\bsl{a}_1$ and $\bsl{a}_2$ are two corresponding primitive lattice vectors. The desired $f(z)$ can be constructed by the odd Jacobi theta function written in \cite{Haldane_1985_LJ_function,Kharchev_2015,Slavnov_2020,kulkarni2023formfactorlocaloperators}:
\eq{\Theta(z|\omega)
 =-i\sum_{n\in\mathbb Z}(-1)^n
 e^{i\pi\omega(n+\frac12)^2}
 e^{2\pi i(n+\frac12)z}, \label{eq:theta_definition_main}}
which obeys
\eqa{
\Theta(z+1\mid\omega)
&=-\Theta(z\mid\omega),\\
\Theta(z+\omega\mid\omega)
&=-e^{-\pi i\omega-2\pi iz}\Theta(z\mid\omega).
\label{eq:theta_quasiperiodicity}
}
Then, for any $p\in\dsC$, we define:
\eq{f(z)=e^{-2\pi i\nu_1 z}
 \frac{\Theta\left(z - p-\nu_1\omega+\nu_2 \mid \omega\right)}
 {\Theta\left(z- p \mid \omega\right)} \ ,
 \label{eq:explicit_Ch1_f_main}}
which satisfies the desired conditions, as follows directly from \cref{eq:theta_quasiperiodicity}.
$\Theta(z \mid \omega)$ is entire and has a lattice of simple zeros at $\mathbb Z+\omega\mathbb Z$, as reviewed in \cref{subsub: construction for f at diff positions}.  
Hence the constructed $f(z)$ has simple poles at $z=p+m+\omega n$ and simple zeros at $z=p+m+\omega n+\nu_1\omega-\nu_2,$ with $n,m\in\dsZ$.
From the constructed $f(z)$ in \cref{eq:explicit_Ch1_f_main}, we can get $w(k)$, $U_{\bsl{k}}$ and $P_{\bsl{k}}$ as
\eqa{
& w(k) = \mat{ 1 \\ f(z) } ,\\
& U_{\bsl{k}} = \frac{1}{\sqrt{1+|f(z)|^2}}\mat{ 1 \\ f(z) } ,\\
& P_{\bsl{k}} = \frac{1}{1+|f(z)|^2}\mat{ 1 & f^*(z) \\ f(z) & |f(z)|^2 }.}

As verified in
\cref{subsub: construction for f at diff positions}, the constructed $P_\bsl{k}$ satisfies several properties. First, it is real-analytic over $\dsR^2$. Second, its Chern number is $\Ch = 1$, because $f(z)$ only has one simple pole in the $\BZ$.
Third, $P_{\bsl{k}}$ satisfies $\Tr[g_{\bsl{k}}] = F_{\bsl{k}}$ for any $\bsl{k}$.
Thus, $P_{\bsl{k}}$ is the projector of a Chern-ideal band with $\Ch=1$ in a two-band tight-binding model with exponentially decaying hopping.
Note that $\Delta\bsl{\xi}\notin\mathcal L$ is crucial to this construction.
If $\Delta\bsl{\xi}\in\mathcal L$, then
$\nu_1=\nu_2=0$, and \cref{eq:explicit_Ch1_f_main} reduces to
$f(z)=1$. The resulting projector is constant and therefore has Chern number zero.
On the other hand, $p$ is a free parameter here---any choice of $p$ gives a correct construction of the Chern-ideal band with $\Ch=1$.
Moreover, $\Tr[g_{\boldsymbol{k}}]$ vanishes at the zeros of $\partial_zf$. Since $\partial_zf$ has one double pole per BZ and
obeys the same transformation rule under the shift of reciprocal lattice vectors as $f$, it has exactly two zeros in the BZ, counted with multiplicity, as shown in \cref{subsub: construction for f at diff positions}.
This is consistent with the enforced zeros of the Berry curvature of an isolated band in a two-band model \cite{Varjas_2022}.

\section{No Chern-Ideal Bands in Models with Finite-Range Hopping}\label{sec:Chern-Ideal Bands in Tight-Binding Model}

The previous section (\cref{sec:Chern-Ideal Bloch States in the Finite-Orbital Hilbert Space}) studied Chern-ideal bands in models with exponentially decaying hopping.
We now first show that strictly finite-range hopping cannot realize an isolated Chern-ideal band, regardless of the number of orbitals and the embedding.
Then, we will move onto the critical bands.

Finite-range hopping means that there are only finitely many $\bsl{R}$'s for which $t_\bsl{R}\ne0$.
Let us define the finite set of all distinct  embedded displacements of nonzero hoppings:
\eq{
\mathcal{D}
=
\left\{
\bsl{R}
+
\bsl{\xi}_{\alpha}
- \bsl{\xi}_{\alpha'}
\mid
\bsl{R}\in\mathcal{L},
\alpha,\alpha'=1,\ldots, N,
\left[t_\bsl{R}\right]_{\alpha\alpha'}\neq 0
\right\}.
}
We note that different orbital pairs $(\alpha,\alpha')$ and Bravais-lattice vectors $\bsl{R}$ may give the same displacement $\bsl{\rho} = \bsl{R}+\bsl{\xi}_{\alpha}-\bsl{\xi}_{\alpha'}$.
The set $\mathcal{D}$ contains only one element for each distinct displacement.
With the set of $\bsl{\rho}$ defined, we can rewrite the Fourier transformation of the Hamiltonian in \cref{eq: bloch_hamiltonian_embedding_main} as
\eq{
h(\bsl{k})
=
\sum_{\bsl{\rho}\in\mathcal{D}}
e^{-i\bsl{k}\cdot\bsl{\rho}}
A_{\bsl{\rho}},
\label{eq:finite_range_physical_sum_main}
}
where $\left[A_{\bsl{\rho}}\right]_{\alpha\alpha'} = 0 $ if there is no hopping with displacement $\bsl{\rho}$ from the orbital $\alpha'$ to $\alpha$, and $\left[A_{\bsl{\rho}}\right]_{\alpha\alpha'} = \left[t_{\bsl{R}}\right]_{\alpha\alpha'}$ with $\bsl{\rho}=\bsl{R}+\bsl{\xi}_{\alpha}-\bsl{\xi}_{\alpha'}$ otherwise.
\Cref{eq:finite_range_physical_sum_main} holds because for a specific $\bsl{\rho}\in \mathcal{D}$ and a pair of orbitals $(\alpha,\alpha')$, there is either zero or one $\bsl{R}$ that satisfies $\bsl{\rho} = \bsl{R}+\bsl{\xi}_{\alpha}-\bsl{\xi}_{\alpha'}$.

Now suppose there exists
an isolated Chern-ideal band with dispersion $E_{\bsl{k}}$ in a finite-range-hopping model.
Since $\Ch>0$, there must exist $\bsl{k}_0$ such that $\Tr[g_{\bsl{k}_0}]>0$.
At such a $\bsl{k}_0$, we have the holomorphic $w(k)$ in \cref{eq:w_k} defined in a small enough open \emph{square} neighborhood of $\bsl{k}_0$, \ie, $k_x \in I_x \equiv (-\epsilon+k_{0,x}, \epsilon+k_{0,x})$ and $k_y \in I_y \equiv (-\epsilon+k_{0,y}, \epsilon+k_{0,y})$, for some small positive $\epsilon$.
Throughout the neighborhood $I_x \times I_y$, $w(k)$ satisfies 
\eq{h(\bsl k) w(k)=E_{\bsl{k}}w(k), \label{eq:Hw_main_proof}}
which leads to
\eq{
\left[h(\bsl k) w(k)\right]_a = E_{\bsl{k}}
}
where $a$ is the index of the chosen nonzero component of $U_{\bsl{k}}$ in \cref{eq:Uk_over_Uka}.
For any $\alpha \neq a$, we then have 
\eq{\sum_{\boldsymbol\rho\in\mathcal D}
e^{-i\bsl k\cdot\boldsymbol\rho}
C_{\alpha,\boldsymbol\rho}(k)=0
\label{eq:C_sum_main}}
for all $\bsl{k}$ in the neighborhood $I_x \times I_y$, where
\eq{C_{\alpha,\boldsymbol\rho}(k) =[A_{\boldsymbol\rho} w(k)]_\alpha -w_\alpha(k)[A_{\boldsymbol\rho} w(k)]_a. \label{eq:C_definition_main}} 
Owing to the fact that each $C_{\alpha,\boldsymbol\rho}$ is holomorphic in $k$, \cref{eq:C_sum_main} implies 
\eq{C_{\alpha,\boldsymbol \rho}(k)=0, \quad \forall \alpha\neq a,\bsl{\rho} \in\mathcal{D}. \label{eq:C_zero_main}}
To show this, we extend the real variables $k_x$ and $k_y$
to independent complex variables $K_x,K_y\in\mathbb C$.
Define $Y=K_x-\ii K_y$ and $W=K_x+\ii K_y$, and then the identity theorem implies that \cref{eq:C_sum_main} can be extended to
\eq{\sum_{\boldsymbol\rho\in\mathcal D}
e^{-\ii  Y (\rho_x + \ii \rho_y)/2 } e^{-\ii  W (\rho_x - \ii \rho_y)/2 }
C_{\alpha,\boldsymbol\rho}(Y)=0
\label{eq:C_sum_complexified_main}} in a small complex neighborhood that contains $I_x\times I_y$.
Since $e^{-\ii  W (\rho_x - \ii \rho_y)/2 }$ is linearly independent for finitely many different $\bsl{\rho}$, $e^{-\ii Y (\rho_x + \ii \rho_y)/2}
C_{\alpha,\bsl{\rho}}(Y)
= 0$ holds separately for each $\bsl{\rho}$, which leads to \cref{eq:C_zero_main}, as shown in \cref{subsec:Isolated Chern Ideal Bands}. Note that this argument is not necessarily applicable to exponentially decaying hopping as it has infinitely many distinct $\bsl{\rho}$. Using $w_a(k)=1$, \cref{eq:C_zero_main} is equivalent to
\eq{A_{\boldsymbol\rho} w(k)
 =\lambda_{\boldsymbol\rho}(k) w(k),
 \qquad
 \lambda_{\boldsymbol\rho}(k)\equiv[A_{\boldsymbol\rho} w(k)]_a.
 \label{eq:Arho_w_main}}
Notice that for every fixed $\boldsymbol\rho$, $\lambda_{\boldsymbol\rho}(k)$ is holomorphic in the neighborhood $I_x\times I_y$ and is an eigenvalue of the fixed finite matrix $A_{\boldsymbol\rho}$. 
It therefore takes values in a finite set and must be constant on the neighborhood $I_x\times I_y$, leading to
\eq{A_{\boldsymbol\rho} w(k)=\lambda_{\boldsymbol\rho} w(k). \label{eq:lambda_constant_main}}
Then, if $w(k)$ is nonconstant, different values of $w(k)$ in the neighborhood must form a basis of dimension at least two, since $w_a(k)=1$ is fixed.
Define $v_1=w(k_1)$ and $v_2=w(k_2)$ to be two independent vectors that satisfy \cref{eq:lambda_constant_main}.
Then
\eq{
h(\bsl{k})v_{i}
=
\sum_{\bsl{\rho}\in\mathcal D}
e^{-\ii\bsl{k}\cdot\bsl{\rho}}
\lambda_{\bsl{\rho}}v_{i} =
E_{\bsl{k}}v_{i} \ ,
}
for both $i=1,2$ and for all $\bsl{k}$ in the neighborhood $I_x \times I_y$.
This degeneracy contradicts the assumption that $E_{\bsl{k}}$ is isolated, and thus $w(k)$ must also be constant in the neighborhood.
But a constant $w(k)$ would give a constant $P_{\bsl{k}}$ which leads to $\Tr[g_{\bsl{k}_0}]=0$, contradicting $\Tr[g_{\bsl{k}_0}]>0$.
Therefore, there cannot be Chern-ideal isolated bands in a finite-range-hopping tight-binding model, regardless of the number of orbitals and the embedding.

Since our proof is local in $\bsl{k}$, it also applies to the critical
bands proposed in
\cite{li2026stabletopologyexactlyflat,
liu2026theorytopologicaltopologicalflatbands},
where the target band touches other bands at finitely many isolated
momenta while its projector remains continuous and the Berry curvature remains finite across the touching points.
For such Chern-ideal critical bands with positive Chern number, there must exist a $\bsl{k}_0$ away from these touching points where $\Tr[g_{\bsl{k}_0}]=F_{\bsl{k}_0}>0$.  Following the same arguments in the preceding proof, this leads to band degeneracy away from the assumed touching points or $\Tr[g_{\bsl{k}_0}]=0$, resulting in a contradiction.
Therefore, critical bands with isolated touching points and non-diverging Berry curvature cannot be Chern-ideal either.

\section{WL-Ideal Bands with Zero Total Chern number}

Now we discuss the WL-ideal bands~\cite{Yu_2025_WL_ideal}.
We focus on a generic isolated set of two WL-ideal bands with zero total Chern numbers, nontrivial normal WL winding and non-singular non-Abelian Berry curvature.
As shown in \refcite{Yu_2025_WL_ideal}, such WL-ideal bands can always be unitarily transformed into two Chern-ideal states with opposite Chern numbers.
The Chern-ideal states have the same embedding as the parent WL-ideal band, since they are eigenstates of the non-abelian Berry curvature.
Therefore, the considered WL-ideal bands cannot exist in any finite-band tight-binding model with finite-range hopping.
The construction of the considered WL-ideal bands in finite-band tight-binding models with exponentially decaying hopping depends on the symmetries considered.
For the Kane-Mele $\dsZ_2$-ideal bands protected by the spinful time-reversal symmetry, the construction can be naturally done by combining a two-band model with a $\Ch=1$ Chern-ideal band with its time-reversal partner.
Comprehensive constructions covering all symmetry cases are left for future work.

\section{Conclusion and Discussion}\label{sec:Conclusion and Discussion}

We provide a construction of $|\Ch|=1$ Chern-ideal bands in finite-band tight-binding model with exponentially decaying hopping, and show that no Chern-ideal bands can exist when the hopping becomes finite range.
The construction and obstruction are generalized to some WL-ideal bands with zero-total Chern numbers.
Our results provide helpful guidance for future constructions of analytic tight-binding models for fractional Chern insulators.

\section{Acknowledgments }
We thank Jonah Herzog-Arbeitman, Steven Simon, Zhida Song, Ziwei Wang, and Fengcheng Wu for helpful discussions.
This work was supported by J.Y.’s startup funds from the University of Florida.
S.R. is supported by a Simons
Investigator Grant from the Simons Foundation (Award
No. 566116). This work is supported by the Gordon
and Betty Moore Foundation EPiQS initiative, Grant
GBMF8685.01.

\emph{Note added:} During the final stage of our work, we became aware of \refcite{Li_2026_Lattice_GLL,Other_2026_to_appear}, which overlap with our work. The overlapping results of our work and \refcite{Li_2026_Lattice_GLL,Other_2026_to_appear} agree.

\bibliography{bibfile_references_canonical,bib2.bib}

\appendix
\onecolumngrid
\clearpage

\tableofcontents

\section{Setup}
\label{app: basics}

We consider a two-dimensional Bravais lattice $\mathcal{L}$, with
lattice vectors denoted by $\bsl{R}\in\mathcal{L}$.
In each unit cell, $\alpha=1,\ldots,N$ labels the internal degrees of
freedom, which we collectively refer to as ``orbitals.''
Here, we use the term ``orbital'' in a broad sense: besides atomic
orbitals, $\alpha$ may also label other internal degrees of freedom
such as spin, valley, or sublattice.
The single-particle Hilbert space is spanned by the orthonormal basis
states $\ket{\bsl{R},\alpha}$ satisfying $\braket{\bsl{R},\alpha}{\bsl{R}',\alpha'}
= \delta_{\bsl{R}\bsl{R}'}
\delta_{\alpha\alpha'}$.
We use $\bsl{\tau}_{\alpha}$ to denote the physical position of orbital
$\alpha$ relative to the origin of the $\bsl{R}=0$ unit cell.
Therefore, the physical position of orbital $\alpha$ in unit cell
$\bsl{R}$ is $\bsl{R}+\bsl{\tau}_{\alpha}$. The general tight-binding Hamiltonian takes the
form
\eq{
H = \sum_{\bsl{R}\bsl{R}'}\sum_{\alpha\alpha'} \left[t_{\bsl{R}-\bsl{R}'}\right]_{\alpha\alpha'}\ket{\bsl{R},\alpha}\bra{\bsl{R}',\alpha'}\ ,
}
where $t_{\bsl{R}-\bsl{R}'}$ only depends on $\bsl{R}-\bsl{R}'$ due to the lattice translation symmetry.
Since there are $N$ orbitals in each unit cell, we refer to this as an
$N$-orbital, or equivalently an $N$-band, model.

To transform to momentum space, we need to define the form of the Fourier transformation, where the concept of ``embedding'' enters.
Specifically, we define
\eq{
\ket{\bsl{k},\alpha} = \frac{1}{\sqrt{N_k}}\sum_{\bsl{R}}e^{\ii(\bsl{R}+\bsl{\xi}_{\alpha})\cdot\bsl{k}} \ket{\bsl{R},\alpha}\ ,
\label{eq:embedded_fourier_transform}
}
where $N_k$ is the number of unit cells (which is equal to the number of $\bsl{k}$ points), we call $\bsl{\xi}_{\alpha}$ the embedded position of the orbital $\alpha$ in $\bsl{R}=0$ unit cell.
The embedded positions specify
the transformation of the orbital Bloch basis under reciprocal-lattice
translations. For any reciprocal-lattice vector $\bsl{G}$,
\eq{
\ket{\bsl{k}+\bsl{G},\alpha}
=
e^{\ii\bsl{G}\cdot\bsl{\xi}_{\alpha}}
\ket{\bsl{k},\alpha},
\label{eq:embedded_basis_covariance}
}
where we used $e^{\ii\bsl{G}\cdot\bsl{R}}=1$ for every
$\bsl{R}\in\mathcal{L}$.

Therefore, the choice $\bsl{\xi}_{\alpha}=0$ for all $\alpha$ gives the periodic
embedding, for which the orbital Bloch basis is periodic under
reciprocal-lattice translations.
The choice
$\bsl{\xi}_{\alpha}=\bsl{\tau}_{\alpha}$ gives the physical embedding,
for which the embedded positions are the same with the physical orbital
positions \cite{haldane2014attachmentsurfacefermiarcs,Lim_2015,Simon_2020,Fuchs_2021,Huhtinen_2022,PhysRevLett.133.246603}.
In the following, we keep the embedded positions
$\{\bsl{\xi}_{\alpha}\}$ arbitrary and derive all results using
\cref{eq:embedded_fourier_transform}. The periodic and physical
embeddings are therefore both included automatically.

The Hamiltonian in momentum space reads
\eq{
H = \sum_{\bsl{k}\in\BZ}\sum_{\alpha,\alpha'} \ket{\bsl{k},\alpha} \left[h(\bsl{k})\right]_{\alpha\alpha'}\bra{\bsl{k},\alpha'}\ , 
\label{eq:Hamiltonian_main_tb}
}
where
\eqa{
\left[h(\bsl{k})\right]_{\alpha\alpha'}
&=
\bra{\bsl{k},\alpha}
H
\ket{\bsl{k},\alpha'}
\\
&=
\left[
\frac{1}{\sqrt{N_k}}
\sum_{\bsl{R}\in\mathcal L}
e^{-\ii(\bsl{R}+\bsl{\xi}_{\alpha})\cdot\bsl{k}}
\bra{\bsl{R},\alpha}
\right]
H
\left[
\frac{1}{\sqrt{N_k}}
\sum_{\bsl{R}'\in\mathcal L}
e^{\ii(\bsl{R}'+\bsl{\xi}_{\alpha'})\cdot\bsl{k}}
\ket{\bsl{R}',\alpha'}
\right]
\\
&=
\frac{1}{N_k}
\sum_{\bsl{R},\bsl{R}'\in\mathcal L}
e^{-\ii(\bsl{R}+\bsl{\xi}_{\alpha})\cdot\bsl{k}}
e^{\ii(\bsl{R}'+\bsl{\xi}_{\alpha'})\cdot\bsl{k}}
\bra{\bsl{R},\alpha}
H
\ket{\bsl{R}',\alpha'}
\\
&=
\frac{1}{N_k}
\sum_{\bsl{R},\bsl{R}'\in\mathcal L}
e^{-\ii(\bsl{R}+\bsl{\xi}_{\alpha})\cdot\bsl{k}}
e^{\ii(\bsl{R}'+\bsl{\xi}_{\alpha'})\cdot\bsl{k}}
\left[
t_{\bsl{R}-\bsl{R}'}
\right]_{\alpha\alpha'}
\\
&=
\sum_{\bsl{\Delta R}\in\mathcal L}
\left[
t_{\bsl{\Delta R}}
\right]_{\alpha\alpha'}
e^{-\ii\bsl{k}\cdot
\left(
\bsl{\Delta R}
+\bsl{\xi}_{\alpha}
-\bsl{\xi}_{\alpha'}
\right)} .
\label{eq:h_from_t_general_embedding}
}
We let $
\bsl{\Delta R}
\equiv
\bsl{R}-\bsl{R}'$ and in the last equality we used
$\sum_{\bsl{R}'\in\mathcal L}1=N_k$.
Conversely, the real-space hopping matrices are recovered from the
inverse Fourier transformation,
\eq{
\left[t_{\bsl{\Delta R}}\right]_{\alpha\alpha'}
=
\frac{1}{N_k}
\sum_{\bsl{k}\in\BZ}
e^{\ii\bsl{k}\cdot
\left(
\bsl{\Delta R}
+\bsl{\xi}_{\alpha}
-\bsl{\xi}_{\alpha'}
\right)}
\left[h(\bsl{k})\right]_{\alpha\alpha'} .
\label{eq:t_from_h_general_embedding}
}
The Bloch momenta may be represented by
$\bsl{k}\in\BZ$, while
\cref{eq:h_from_t_general_embedding} defines $h(\bsl{k})$ for every
$\bsl{k}\in\dsR^2$. For any reciprocal-lattice vector $\bsl{G}$, using
$e^{-\ii\bsl{G}\cdot\bsl{\Delta R}}=1$ for every
$\bsl{\Delta R}\in\mathcal L$, we obtain
\eqa{
\left[h(\bsl{k}+\bsl{G})\right]_{\alpha\alpha'}
&=
\sum_{\bsl{\Delta R}\in\mathcal L}
\left[t_{\bsl{\Delta R}}\right]_{\alpha\alpha'}
e^{-\ii(\bsl{k}+\bsl{G})\cdot
\left(
\bsl{\Delta R}
+\bsl{\xi}_{\alpha}
-\bsl{\xi}_{\alpha'}
\right)}
\\
&=
e^{-\ii\bsl{G}\cdot
\left(
\bsl{\xi}_{\alpha}
-\bsl{\xi}_{\alpha'}
\right)}
\sum_{\bsl{\Delta R}\in\mathcal L}
\left[t_{\bsl{\Delta R}}\right]_{\alpha\alpha'}
e^{-\ii\bsl{k}\cdot
\left(
\bsl{\Delta R}
+\bsl{\xi}_{\alpha}
-\bsl{\xi}_{\alpha'}
\right)}
\\
&=
e^{-\ii\bsl{G}\cdot
\left(
\bsl{\xi}_{\alpha}
-\bsl{\xi}_{\alpha'}
\right)}
\left[h(\bsl{k})\right]_{\alpha\alpha'}.
\label{eq:generalEmbedding}
}
Equivalently,
\eq{
h(\bsl{k}+\bsl{G})
=
V_{\bsl{G}}
h(\bsl{k})
V_{\bsl{G}}^\dagger,
\label{eq:h_embedding}
}
where
\eq{
V_{\bsl{G}}
\equiv
\operatorname{diag}
\left(
e^{-\ii\bsl{G}\cdot\bsl{\xi}_{1}},
e^{-\ii\bsl{G}\cdot\bsl{\xi}_{2}},
\ldots,
e^{-\ii\bsl{G}\cdot\bsl{\xi}_{N}}
\right).
\label{eq:general_embedding_matrix}
}

\subsection{Analyticity, Locality, and Isolated-Band Projectors}

\begin{theorem}
\label{thm:exp_decay_analyticity}
For a finite-band tight-binding model, the real-space hopping
matrices $t_{\bsl{R}}$ decay exponentially if and only if $h(\bsl{k})$ is real-analytic in $\bsl{k}$ .
\end{theorem}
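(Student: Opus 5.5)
First, reduce the statement to periodic functions on the torus. For fixed $\alpha,\alpha'$, set $\tilde h_{\alpha\alpha'}(\bsl k)\equiv e^{\ii\bsl k\cdot(\bsl\xi_\alpha-\bsl\xi_{\alpha'})}[h(\bsl k)]_{\alpha\alpha'}=\sum_{\bsl R}[t_{\bsl R}]_{\alpha\alpha'}e^{-\ii\bsl k\cdot\bsl R}$. This function is $\mathcal L^*$-periodic, and its Fourier coefficients are exactly the hoppings, as in \cref{eq:t_from_h_general_embedding} taken in the thermodynamic limit, where the $\bsl k$-sum becomes a BZ integral. The prefactor $e^{\ii\bsl k\cdot(\bsl\xi_\alpha-\bsl\xi_{\alpha'})}$ is entire and nowhere vanishing. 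Hence $h$ is real-analytic if and only if every $\tilde h_{\alpha\alpha'}$ is real-analytic. Since $N$ is finite, it is enough to prove the scalar statement: a periodic function $\phi$ on $\dsR^2$ is real-analytic if and only if its Fourier coefficients $c_{\bsl R}$ satisfy $|c_{\bsl R}|\le Ce^{-\mu|\bsl R|}$ for some $C,\mu>0$.

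\emph{Decay implies analyticity.} Replace $\bsl k$ by a complex vector $\bsl k+\ii\bsl q$. Then $|c_{\bsl R}e^{-\ii(\bsl k+\ii\bsl q)\cdot\bsl R}|\le Ce^{-(\mu-|\bsl q|)|\bsl R|}$. For $|\bsl q|\le\mu'<\mu$ the series therefore converges absolutely and uniformly, because $\sum_{\bsl R}e^{-\delta|\bsl R|}<\infty$ on a 2D lattice. Uniform limits of holomorphic functions are holomorphic, so $\phi$ extends holomorphically to the strip $|\Im\bsl k|<\mu$ in $\dsC^2$. Its restriction to $\dsR^2$ is therefore real-analytic.

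\emph{Analyticity implies decay.} This is the main step. At each real point $\bsl k_0$, real-analyticity gives a convergent power series, which defines a holomorphic extension on a complex polydisc of radius $r(\bsl k_0)>0$. Extensions on overlapping polydiscs agree, by the identity theorem applied on their (connected) intersections, which meet $\dsR^2$. The closed BZ is compact, so finitely many polydiscs cover it. By periodicity, $\phi$ then extends holomorphically to a uniform tube $\{\bsl k+\ii\bsl q:|\bsl q|<\mu_0\}$ and stays bounded on the closed sub-tube $|\bsl q|\le\mu_0/2$. To handle single-valuedness on the tube, I would take $\mu_0$ smaller than a Lebesgue number of the cover, so that every point of the tube lies in a polydisc centered within distance $\mu_0$ of its real part.

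Then I shift the integration contour. Write $c_{\bsl R}=\frac{1}{|\BZ|}\int_{\BZ}\phi(\bsl k)e^{\ii\bsl k\cdot\bsl R}d^2k$ and replace $\bsl k\to\bsl k+\ii\bsl q$ with $\bsl q=(\mu_0/2)\hat{\bsl R}$. Cauchy's theorem applies one variable at a time, and the side contributions cancel by periodicity. The result is $|c_{\bsl R}|\le\sup|\phi|\,e^{-\mu_0|\bsl R|/2}$, which is exponential decay.

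The main obstacle is the uniform-width extension in this second direction; compactness of the BZ together with periodicity is what makes it work. Finally, I would remark that "decay" is taken for each matrix element and uniformly in $\alpha,\alpha'$, which is automatic because $N$ is finite.
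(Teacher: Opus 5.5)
Your proposal is correct and follows essentially the same route as the paper. You conjugate by the diagonal phases $e^{\ii\bsl k\cdot\bsl\xi_\alpha}$ to reduce to the periodic series $\sum_{\bsl R}t_{\bsl R}e^{-\ii\bsl k\cdot\bsl R}$, apply the Weierstrass $M$-test to the complexified series for one direction, and for the converse obtain a uniform-width holomorphic tube over the compact $\BZ$ and shift the integration contour by $\ii\mu'\hat{\bsl R}$; your gluing of local polydisc extensions via a Lebesgue-number argument just makes explicit the uniform-tube step that the paper asserts directly from compactness.
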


\begin{proof}
It was proven in \refcite{Kuchment_2008,Kuchment_2016}, and we rewrite the proof here.
Define $D_{\bsl{\xi}}(\bsl{k})
=
\operatorname{diag}
\left(
e^{-\ii\bsl{k}\cdot\bsl{\xi}_1},
\ldots,
e^{-\ii\bsl{k}\cdot\bsl{\xi}_N}
\right)$ and let
\eq{
\widetilde h(\bsl{k}) \equiv
D_{\bsl{\xi}}^{-1}(\bsl{k})
h(\bsl{k})
D_{\bsl{\xi}}(\bsl{k})
=
D_{\bsl{\xi}}^\dagger(\bsl{k})
h(\bsl{k})
D_{\bsl{\xi}}(\bsl{k}),
}
where the second equality holds for real $\bsl{k}$.
Using \cref{eq:h_from_t_general_embedding}, we obtain
\eqa{
\left[\widetilde h(\bsl{k})\right]_{\alpha\alpha'}
&=
\left[
D_{\bsl{\xi}}^{-1}(\bsl{k})
h(\bsl{k})
D_{\bsl{\xi}}(\bsl{k})
\right]_{\alpha\alpha'} =
e^{\ii\bsl{k}\cdot\bsl{\xi}_{\alpha}}
\left[h(\bsl{k})\right]_{\alpha\alpha'}
e^{-\ii\bsl{k}\cdot\bsl{\xi}_{\alpha'}}
\\
&=
e^{\ii\bsl{k}\cdot\bsl{\xi}_{\alpha}}
\sum_{\bsl{R}\in\mathcal L}
\left[t_{\bsl{R}}\right]_{\alpha\alpha'}
e^{-\ii\bsl{k}\cdot
\left(
\bsl{R}
+\bsl{\xi}_{\alpha}
-\bsl{\xi}_{\alpha'}
\right)}
e^{-\ii\bsl{k}\cdot\bsl{\xi}_{\alpha'}}
\\
&=
\sum_{\bsl{R}\in\mathcal L}
\left[t_{\bsl{R}}\right]_{\alpha\alpha'}
e^{-\ii\bsl{k}\cdot\bsl{R}}.
}
Here we let $\bsl R$ denote the previous $\bsl{\Delta R}$ index. Since $D_{\bsl{\xi}}(\bsl{k})$ is analytic and invertible,
$h(\bsl{k})$ and $\widetilde h(\bsl{k})$ have the same local
regularity.

Suppose first that the hopping matrices decay exponentially, namely,
that there exist constants $C,\mu>0$ such that
\eq{
\norm{t_{\bsl{R}}}
\leq
C e^{-\mu\lvert\bsl{R}\rvert}
\qquad
\text{for every }\bsl{R}\in\mathcal{L},
\label{eq:exponential_hopping_bound}
}
where $\norm{t_{\boldsymbol R}}$ denotes the largest absolute value of
$\left[t_{\boldsymbol R}\right]_{\alpha\alpha'}$ over all orbital indices
$\alpha$ and $\alpha'$. 
For any nonnegative integers
$m_x,m_y$, we have
\eq{
\sum_{\boldsymbol{R}\in\mathcal L}
\norm{
(-\ii R_x)^{m_x}
(-\ii R_y)^{m_y}
t_{\boldsymbol{R}}
e^{-\ii\boldsymbol{k}\cdot\boldsymbol{R}}
}
\leq
C
\sum_{\boldsymbol{R}\in\mathcal L}
\lvert\boldsymbol{R}\rvert^{m_x+m_y}
e^{-\mu\lvert\boldsymbol{R}\rvert}
<\infty.
}
where we used $\lvert R_x\rvert,\lvert R_y\rvert\leq\lvert\bsl R\rvert$, $\lvert e^{-\ii\bsl k\cdot\bsl R}\rvert=1$. The bound is independent of $\boldsymbol{k}$, so the series obtained
after differentiating also converges absolutely and uniformly by the Weierstrass $M$-test. Therefore, the differentiation is
valid, and
\eqa{
\partial_{k_x}^{m_x}
\partial_{k_y}^{m_y}
\widetilde h(\boldsymbol{k})
=
\sum_{\boldsymbol{R}\in\mathcal L}
(-\ii R_x)^{m_x}
(-\ii R_y)^{m_y}
t_{\boldsymbol{R}}
e^{-\ii\boldsymbol{k}\cdot\boldsymbol{R}}.
\label{eq:derivative_series}
}
Since this holds for every $m_x,m_y$,
$\widetilde h(\boldsymbol{k})$ is smooth.
As $D_{\bsl{\xi}}(\bsl{k})$ is smooth in $\bsl{k}$, we have $h(\bsl{k})$ is smooth.

More strongly, consider a complex momentum
$\bsl{k}+\ii\bsl{\kappa}$, where
$\bsl{\kappa}\in\dsR^2$. Then
\eq{
\norm{
t_{\bsl{R}}
e^{-\ii(\bsl{k}+\ii\bsl{\kappa})\cdot\bsl{R}}
} =
\norm{t_{\bsl{R}}}
e^{\bsl{\kappa}\cdot\bsl{R}} \leq
C e^{-\mu\lvert\bsl{R}\rvert}
e^{\lvert\bsl{\kappa}\rvert\lvert\bsl{R}\rvert} =
C e^{-(\mu-\lvert\bsl{\kappa}\rvert)
\lvert\bsl{R}\rvert},
}
where we used
$\bsl{\kappa}\cdot\bsl{R}
\leq
\lvert\bsl{\kappa}\rvert\lvert\bsl{R}\rvert$.
For any $0<\mu'<\mu$ and
$\lvert\bsl{\kappa}\rvert\leq\mu'$, we have
\eq{
C e^{-(\mu-\lvert\bsl{\kappa}\rvert)
\lvert\bsl{R}\rvert}
\leq
C e^{-(\mu-\mu')\lvert\bsl{R}\rvert}.
}
The right-hand side is independent of $\bsl{k}$ and
$\bsl{\kappa}$, and its sum over
$\bsl{R}\in\mathcal L$ converges.
Therefore, by the Weierstrass $M$-test, the complexified Fourier
series converges uniformly for
$\lvert\bsl{\kappa}\rvert\leq\mu'$.
Since each term is holomorphic in the complex momentum variables, the
sum is also holomorphic. Hence $\widetilde h(\bsl{k})$ admits a
holomorphic extension to a neighborhood of real momentum space and is
therefore real-analytic.

Conversely, suppose that $ h(\bsl{k})$ is real-analytic on
$\mathbb{R}^2$, which implies that $\widetilde h(\bsl{k})$ is real-analytic on
$\mathbb{R}^2$.
Then, $\widetilde h(\bsl{k})$ admits a holomorphic
extension $\widetilde h(\bsl{k}+\ii\bsl{\kappa})$ to a complex neighborhood of the BZ, then we can choose $\lvert\bsl{\kappa}\rvert<\mu$ for some fixed $\mu>0$ over the $\BZ$, as  $\BZ$ is a compact space.
In the thermodynamic limit,
\eq{
t_{\bsl{R}}
=
\frac{1}{\vol(\BZ)}
\int_{\BZ}
\widetilde h(\bsl{k})
e^{\ii\bsl{k}\cdot\bsl{R}}
\dd^2k.
\label{eq:inverse_fourier_h_t}
}
For $\bsl{R}\neq0$, choose any $0<\mu'<\mu$ and define $\bsl{\kappa}_{\bsl{R}}
= \mu' \frac{\bsl{R}}{\lvert\bsl{R}\rvert}.$
Then $\bsl{\kappa}_{\bsl{R}}\cdot\bsl{R} = \mu'\lvert\bsl{R}\rvert.$
Because $\widetilde h$ is holomorphic and reciprocal-lattice
periodic, the integration contour in
\cref{eq:inverse_fourier_h_t} may be shifted from $\bsl{k}$ to
$\bsl{k}+\ii\bsl{\kappa}_{\bsl{R}}$. Hence
\eq{
t_{\bsl{R}} =
\frac{1}{\vol(\BZ)}
\int_{\BZ}
\widetilde h
\left(
\bsl{k}+\ii\bsl{\kappa}_{\bsl{R}}
\right)
e^{\ii
\left(
\bsl{k}+\ii\bsl{\kappa}_{\bsl{R}}
\right)\cdot\bsl{R}}\dd^2k =
e^{-\mu'\lvert\bsl{R}\rvert}
\frac{1}{\vol(\BZ)}
\int_{\BZ}
\widetilde h
\left(
\bsl{k}+\ii\bsl{\kappa}_{\bsl{R}}
\right)
e^{\ii\bsl{k}\cdot\bsl{R}} \dd^2k.
}
Since $\widetilde h$ is continuous and reciprocal-lattice periodic,
it is bounded on $\bsl{k}\in\BZ, \lvert\bsl{\kappa}\rvert\leq\mu'.$
Thus, there exists a constant $C_{\mu'}>0$ such that $\norm{
\widetilde h(\bsl{k}+\ii\bsl{\kappa})
} \leq
C_{\mu'}$. Using
$\lvert e^{\ii\bsl{k}\cdot\bsl{R}}\rvert=1$, we obtain
\eq{
\norm{t_{\bsl{R}}} \leq
e^{-\mu'\lvert\bsl{R}\rvert}
\frac{1}{\vol(\BZ)}
\int_{\BZ}
\norm{
\widetilde h
\left(
\bsl{k}+\ii\bsl{\kappa}_{\bsl{R}}
\right)
}
\dd^2k \leq
C_{\mu'}
e^{-\mu'\lvert\bsl{R}\rvert}.
}
The coefficient is finite and can be absorbed into the
constant. Therefore, the hopping matrices decay exponentially. This proves both directions of the equivalence.
\end{proof}

We next consider the projector of an energy band. Suppose $E_{\bsl{k}}$ is an energy band of interest, and $U_{\bsl{k}}$ is the corresponding eigenvector:
\eq{
h(\bsl{k}) U_{\bsl{k}} = E_{\bsl{k}} U_{\bsl{k}}\ .
\label{eq:eigeneq}
}
We define $E_{\bsl{k}}$ to be isolated at $\bsl{k}$ if $E_{\bsl{k}}$ is different from all other eigenvalues of $h(\bsl{k})$.
If $E_{\bsl{k}}$ of $h(\bsl{k})$ is isolated at all $\bsl{k}$, we call $E_{\bsl{k}}$ an isolated band.
Owing to \cref{eq:h_embedding}, we only consider bands with $E_{\bsl{k}+\bsl{G}} = E_{\bsl{k}}$ and $\left| U_{\bsl{k}+\bsl{G}}^\dagger  V_{\bsl{G}}U_{\bsl{k}} \right| = 1$. An isolated band always satisfies the condition.

\begin{theorem}
\label{thm:P_smooth_at_Isolated}
Let $h(\bsl{k})$ be real-analytic in an open neighborhood of
$\bsl{k}_0\in\dsR^2$. 
If $E_{\bsl{k}}$ is isolated at $\bsl{k}_0\in \dsR^2 $, then $P_{\bsl{k}}=U_{\bsl{k}} U_{\bsl{k}}^\dagger$ is real-analytic in an open neighborhood
of $\bsl{k}_0$
\end{theorem}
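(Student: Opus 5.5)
We will realize $P_{\bsl{k}}$ as a Riesz projection, i.e.\ as a contour integral of the resolvent of $h(\bsl{k})$. Real-analyticity will then follow from the holomorphy of the complexified resolvent in the momentum variables. This route avoids choosing a smooth gauge for $U_{\bsl{k}}$, which may not exist globally and is not needed.

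First, since $E_{\bsl{k}_0}$ is a simple, isolated eigenvalue of the Hermitian matrix $h(\bsl{k}_0)$, let $\delta>0$ be half the distance from $E_{\bsl{k}_0}$ to the rest of the spectrum of $h(\bsl{k}_0)$. Let $\Gamma$ be the circle in $\dsC$ of radius $\delta$ centered at $E_{\bsl{k}_0}$.

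Second, by hypothesis $h(\bsl{k})$ extends to a holomorphic matrix function $h(\bsl{K})$, with $\bsl{K}=(K_x,K_y)\in\dsC^2$, on a complex polydisc around $\bsl{k}_0$. This extension is not Hermitian off the real slice. By continuity of eigenvalues (they are roots of the characteristic polynomial, whose coefficients depend continuously on $\bsl{K}$), we can shrink the polydisc so that no eigenvalue of $h(\bsl{K})$ lies on $\Gamma$. We can also arrange that exactly one eigenvalue, counted with multiplicity, lies inside $\Gamma$. Concretely, $\det(z-h(\bsl{K}))$ is bounded away from zero uniformly for $z\in\Gamma$ by compactness, and a Rouché/argument-principle count gives one enclosed root.

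We then define
\eq{
P(\bsl{K})=\frac{1}{2\pi\ii}\oint_\Gamma \left(z-h(\bsl{K})\right)^{-1}\dd z .
}
The integrand is jointly continuous in $(z,\bsl{K})$ and holomorphic in $\bsl{K}$, by Cramer's rule: the inverse equals the adjugate divided by a determinant that is nonvanishing on $\Gamma$. Differentiating under the integral, or applying Morera's theorem in each variable, shows that $P(\bsl{K})$ is holomorphic on the polydisc.

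Third, we identify $P$ on the real slice. For real $\bsl{k}$, $h(\bsl{k})$ is Hermitian, so the spectral decomposition $h=\sum_n E_n Q_n$ gives $(z-h)^{-1}=\sum_n Q_n/(z-E_n)$. The residue theorem then yields the orthogonal projector onto the eigenspaces with eigenvalues inside $\Gamma$. By the counting step, that is exactly the one-dimensional eigenspace of $E_{\bsl{k}}$, so $P(\bsl{k})=U_{\bsl{k}}U_{\bsl{k}}^\dagger=P_{\bsl{k}}$. Here $E_{\bsl{k}}$ is identified as the unique eigenvalue near $E_{\bsl{k}_0}$, which is the continuous continuation of the isolated band. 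Since $P_{\bsl{k}}$ is the restriction of a holomorphic function, it is real-analytic near $\bsl{k}_0$.

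The main obstacle is the second step: ensuring that the spectral gap survives uniformly on a small complex neighborhood, where Hermiticity is lost. This rests only on the continuity of the roots of the characteristic polynomial together with compactness of $\Gamma$, so it should go through directly.
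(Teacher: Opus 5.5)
Your proof is correct and takes essentially the same route as the paper: the Riesz contour-integral formula for the projector, with the resolvent written via the adjugate (Cramer) formula and the eigenvalue count inside the contour fixed by continuity. Your extra step of complexifying to a polydisc with a uniform gap on $\Gamma$ supplies the detail the paper only asserts, namely that integrating over a fixed compact contour preserves real-analyticity.
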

\begin{proof}
It was proven in \refcite{Panati_2013,Monaco_2018}, and we rewrite it here. First, we express the projection matrix in term of the resolvent.
Let us define a contour
\eq{
\gamma = \{ z \in \dsC \mid |z-E_{\bsl{k}_0}| = \epsilon \}
}
with small positive $\epsilon$.
Since $E_{\bsl{k}_0}$ is isolated, we choose $\epsilon>0$ to be
less than half of the minimum distance between $E_{\bsl{k}_0}$ and the other eigenvalues at $\bsl k_0$.
Since $E_{\bsl{k}}$ is continuous in $\bsl{k}$, there exists a sufficiently small
open neighborhood $\mathcal U$ of $\bsl{k}_0$ such that, for every
$\bsl{k}\in\mathcal U$, exactly one eigenvalue of $h(\bsl{k})$ lies
inside $\gamma$, while no eigenvalue lies on $\gamma$. The corresponding spectral projector is therefore given by the Riesz
formula \cite{Monaco_2018},
\eq{
\label{eq:Riesz}
P_{\bsl{k}}
=
\frac{1}{2\pi i}
\oint_{\gamma}
[z-h(\bsl{k})]^{-1}dz
}
for all $\bsl{k}$ in the sufficiently small neighborhood $\mathcal U$ of $\bsl{k}_0$.

To verify that, we note first that $z-h(\bsl{k})$ is invertible for $z\in \gamma$ and all $\bsl{k}\in \mathcal U$ owing to the choice of $\epsilon$, and thus the integral is finite for all $\bsl{k}$ in the infinitesimal neighborhood of $\bsl{k}_0$.
Then, for all $\bsl{k}$ in the infinitesimal neighborhood $\mathcal U$ of $\bsl{k}_0$, we have
\eq{
\left[ \frac{1}{2\pi i}
\oint_{\gamma}
(z-h(\bsl{k}))^{-1}dz  \right] U_{\bsl{k}} = \left[ \frac{1}{2\pi i}
\oint_{\gamma}
(z-E_{\bsl{k}})^{-1}dz  \right] U_{\bsl{k}} = U_{\bsl{k}} 
}
and
\eq{
\left[ \frac{1}{2\pi i}
\oint_{\gamma}
(z-h(\bsl{k}))^{-1}dz  \right] V_{\bsl{k}} = \left[ \frac{1}{2\pi i}
\oint_{\gamma}
(z-E_{\bsl{k}}')^{-1}dz  \right] V_{\bsl{k}} = 0 \ ,}
where $V_{\bsl{k}}$ is any eigenvector of $h(\bsl{k})$ that is orthogonal to $U_{\bsl{k}}$, and the equation holds because $V_{\bsl{k}}$ has eigenvalue $E'_{\bsl{k}}$ that lies outside $\gamma$.
This verifies
\cref{eq:Riesz}.

It remains to determine the regularity of the projector. For every
fixed $z\in\gamma$, the matrix $A_z(\bsl{k}) \equiv z-h(\bsl{k})$
is real-analytic in $\bsl{k}$ and invertible throughout
$\mathcal U$. Its inverse is given by the adjugate formula
\eq{
A_z(\bsl{k})^{-1}
=
\frac{
\operatorname{adj}A_z(\bsl{k})
}{
\det A_z(\bsl{k})
}
=
\frac{
\operatorname{adj}\left[z-h(\bsl{k})\right]
}{
\det\left[z-h(\bsl{k})\right]
}.
\label{eq:real_analytic_inverse_adjugate}
}
The entries of
$\operatorname{adj}A_z(\bsl{k})$ and
$\det A_z(\bsl{k})$ are polynomial expressions in the entries of
$A_z(\bsl{k})$, and are therefore real-analytic in $\bsl{k}$.
Moreover, since $A_z(\bsl{k})$ is invertible,
$\det A_z(\bsl{k})\neq0$ throughout $\mathcal U$. Hence
$1/\det A_z(\bsl{k})$ is also real-analytic, and therefore $\left[z-h(\bsl{k})\right]^{-1}$
is real-analytic in $\bsl{k}$. 
Finally, $\gamma$ is a fixed compact contour, so the contour integral
in \cref{eq:Riesz} preserves real analyticity. It follows that
$P_{\bsl{k}}$ is real-analytic in a neighborhood of
$\bsl{k}_0$.
\end{proof}

By \cref{thm:P_smooth_at_Isolated}, if $h(\bsl{k})$ is
real-analytic on $\dsR^2$, then the projector of an isolated band is
real-analytic on $\dsR^2$.

\begin{theorem}
\label{thm:analytic_projector_parent_hamiltonian}
$P_{\bsl k}$ is real-analytic if and only if it is the isolated-single-band projector of some $h(\bsl{k})$ with exponentially decaying hopping.
\end{theorem}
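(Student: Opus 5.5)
The plan is to prove the two directions separately. Throughout, $P_{\bsl k}$ is understood to carry the covariance $P_{\bsl k+\bsl G}=V_{\bsl G}P_{\bsl k}V_{\bsl G}^\dagger$ fixed by the embedding, since that is the only sense in which a projector can belong to a model of the form \cref{eq: bloch_hamiltonian_embedding_main}.

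\emph{Direction ``if''.} Let $P_{\bsl k}$ be the isolated-single-band projector of some $h(\bsl k)$ with exponentially decaying hopping. By \cref{thm:exp_decay_analyticity}, $h(\bsl k)$ is real-analytic on $\dsR^2$. Since the band is isolated at every $\bsl k$, \cref{thm:P_smooth_at_Isolated} shows that $P_{\bsl k}$ is real-analytic in a neighborhood of every point. Real-analyticity is a local property, so $P_{\bsl k}$ is real-analytic on all of $\dsR^2$.

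\emph{Direction ``only if''.} Given a real-analytic rank-one projector $P_{\bsl k}$ with the above covariance, define the flat-band parent Hamiltonian
\eq{
h(\bsl k)\equiv -P_{\bsl k}=E_0\,P_{\bsl k}+E_1\,(1-P_{\bsl k}),\qquad E_0=-1,\ E_1=0 .
}
Its spectrum at every $\bsl k$ is $\{-1,0\}$, with eigenvalue $-1$ having multiplicity exactly one. Hence $E_{\bsl k}=-1$ is an isolated band, and its projector is exactly $P_{\bsl k}$. This $h(\bsl k)$ is Hermitian and real-analytic, and it inherits the covariance of \cref{eq:h_embedding}: $h(\bsl k+\bsl G)=V_{\bsl G}h(\bsl k)V_{\bsl G}^\dagger$.

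It remains to show that $h$ comes from exponentially decaying hoppings $t_{\bsl R}$ through \cref{eq:h_from_t_general_embedding}. To do this, form $\widetilde h(\bsl k)=D_{\bsl\xi}^{-1}(\bsl k)h(\bsl k)D_{\bsl\xi}(\bsl k)$. Combining $V_{\bsl G}=D_{\bsl\xi}(\bsl G)$ with the covariance of $h$ gives
\eq{
\widetilde h(\bsl k+\bsl G)=\widetilde h(\bsl k),
}
and $\widetilde h$ is real-analytic because $D_{\bsl\xi}$ is entire. Being periodic and smooth, $\widetilde h$ has an absolutely convergent Fourier series $\widetilde h(\bsl k)=\sum_{\bsl R}t_{\bsl R}e^{-\ii\bsl k\cdot\bsl R}$, with $t_{\bsl R}$ given by \cref{eq:inverse_fourier_h_t}. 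Hermiticity of $h$ gives $t_{-\bsl R}=t_{\bsl R}^\dagger$, so these coefficients define a legitimate real-space Hamiltonian. Undoing the conjugation by $D_{\bsl\xi}$ reproduces \cref{eq:h_from_t_general_embedding}. Finally, the converse part of the proof of \cref{thm:exp_decay_analyticity} (contour shift into the complex strip) shows that the $t_{\bsl R}$ decay exponentially.

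The main obstacle is bookkeeping in this last step. \cref{thm:exp_decay_analyticity} is phrased for an $h$ already known to come from hoppings, so one must check that its converse argument uses only periodicity and analyticity of $\widetilde h$. That check requires the covariance of $P_{\bsl k}$ under $V_{\bsl G}$, which holds for the projectors in this setting. One should also confirm that the complex-strip width used in that argument is uniform; this follows from compactness of the BZ, just as in the proof of \cref{thm:exp_decay_analyticity}.
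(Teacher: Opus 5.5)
Your proposal is correct and follows essentially the same route as the paper. You obtain analyticity of the projector from an exponentially decaying $h$ by combining \cref{thm:exp_decay_analyticity} with \cref{thm:P_smooth_at_Isolated}, and for the other direction you build a flat parent Hamiltonian from $P_{\bsl k}$ and apply the analyticity--exponential-decay correspondence. The only differences are your choice of $-P_{\bsl k}$ instead of the paper's $\mathbb{1}_N-2P_{\bsl k}$, which is immaterial, and your extra bookkeeping, which the paper leaves implicit: periodicity of $\widetilde h$, $t_{-\bsl R}=t_{\bsl R}^\dagger$, and checking that the contour-shift argument uses only periodicity and analyticity.
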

\begin{proof}

Combining the preceding results from
\cref{thm:exp_decay_analyticity} 
\cite{Kuchment_2008,Kuchment_2016}
and
\cref{thm:P_smooth_at_Isolated} 
\cite{Panati_2013,Monaco_2018}
gives this useful conclusion. We include the short proof for
completeness. Suppose first that $P_\bsl{k}$ is a real-analytic rank-one projector satisfying $P_{\bsl{k}+\bsl{G}} = V_{\bsl{G}}P_\bsl{k}V_{\bsl{G}}^\dagger.$ Define the parent Bloch Hamiltonian
\eq{
h_P(\bsl{k}) = \mathbb{1}_N-2P_{\bsl k}.
\label{eq:flattened_parent_hamiltonian}
}
Since $P^2=P$, 
\eq{h_P(\bsl{k})P_{\bsl k} =
-P_{\bsl k}, \quad h_P(\bsl{k}) \left[\mathbb{1}_N-P_{\bsl k}\right]
= \mathbb{1}_N-P_{\bsl k}.}
So $P_{\bsl k}$ is the spectral projector of the eigenvalue $-1$. And because of \cref{eq:flattened_parent_hamiltonian}, $h_P(\bsl{k})$ has the same regularity with $P_{\bsl k}$ and it satisfies
\eq{
h_P(\bsl{k}+\bsl{G})
=V_{\bsl{G}}h_P(\bsl{k})V_{\bsl{G}}^\dagger.
}
The corresponding real-space hopping matrices are
\eq{
\left[t^{P}_{\bsl{R}}\right]_{\alpha\alpha'}
= \frac{1}{\vol(\BZ)}
\int_{\BZ}
\dd^2k
e^{\ii\bsl{k}\cdot
\left(
\bsl{R}
+\bsl{\xi}_{\alpha}
-\bsl{\xi}_{\alpha'}
\right)}
\left[h_P(\bsl{k})\right]_{\alpha\alpha'}.
}
Because of \cref{eq:flattened_parent_hamiltonian}, $h_P(\bsl{k})$ is
real-analytic whenever $P_{\bsl{k}}$ is real-analytic.
By the analyticity--exponential-decay correspondence in \cref{thm:exp_decay_analyticity}, the
real-space hopping amplitudes of $h_P(\bsl{k})$ decay exponentially.
Thus, $h_P(\bsl{k})$ is an exponentially decaying parent Hamiltonian
whose isolated $-1$ band has projector $P_{\bsl{k}}$.

Conversely, suppose that an exponentially decaying tight-binding
Hamiltonian  has an isolated rank-one band. By
\cref{thm:exp_decay_analyticity}, its Bloch Hamiltonian
$h(\bsl{k})$ is real-analytic. The analyticity of the corresponding
isolated-band projector then follows from
\cref{thm:P_smooth_at_Isolated}.
\end{proof}

\subsection{Quantum Geometry and Ideality}
\label{app:quantum-geometry}

We now define the quantum geometry of a band $E_{\bsl{k}}$ with projector $P_{\bsl{k}}$. For $i,j\in\{x,y\}$, the quantum geometric tensor is defined locally by \cite{Provost_1980,Kolodrubetz_2017}
\eq{
\mathcal{Q}_{ij}(\bsl{k})
=
\left[
\partial_{k_i}U_{\bsl k}
\right]^\dagger
\left[
\mathbb{1}_N-P_{\bsl k}
\right]
\left[
\partial_{k_j}U_{\bsl k}
\right].
\label{eq:quantum_geometric_tensor_vector}
}
This expression is invariant under the local band-gauge transformation $U_{\bsl k}
\longmapsto
U_{\bsl k} e^{\ii\chi(\bsl{k})}.$
Its symmetric part defines the quantum metric~\cite{Fubini1904,Study1905},
\eq{
\left[g_{\bsl{k}}\right]_{ij}
\equiv
\frac{1}{2}\left[ \mathcal{Q}_{ij}(\bsl{k}) + \mathcal{Q}_{ji}(\bsl{k})\right]
=
\frac{1}{2}
\Tr\left[
\partial_{k_i}P_{\bsl k}
\partial_{k_j}P_{\bsl k}
\right],
\label{eq:quantum_metric_projector}
}
while its antisymmetric part defines the Berry curvature~\cite{TKNN},
\eq{
F_{ij}(\bsl{k})
\equiv
\ii \left[ \mathcal{Q}_{ij}(\bsl{k}) - \mathcal{Q}_{ji}(\bsl{k})\right]
=
\ii\Tr\left[
P_{\bsl k}
\left[
\partial_{k_i}P_{\bsl k},
\partial_{k_j}P_{\bsl k}
\right]
\right].
\label{eq:berry_curvature_projector}
}
In two dimensions, we write $F_{\bsl{k}}
\equiv
F_{xy}(\bsl{k}).$
A pointwise inequality holds
\eq{
\Tr[g_{\bsl{k}}] \geq
\left|F_{\bsl{k}}\right|,
\label{eq:trace_inequality}
}
where $\Tr[g_{\bsl{k}}] \equiv g_{xx}(\bsl{k})+g_{yy}(\bsl{k})$. 
The first Chern number of the band is
\eq{
\Ch
=
\frac{1}{2\pi}
\int_{\BZ}
F_{\bsl{k}}
\dd^2k
\in\dsZ.
\label{eq:chern_number_definition}
}
We call the band Chern-ideal if it saturates this inequality at every
momentum ~\cite{Jie2021IdealBands,Ledwith_2022,Wang_2022,Parker2023IdealBands,Valentin2023IdealBands,Wang_2023,Dong_2023_ideal_Higher_Chern}:
\eq{
\frac{1}{2\pi}\int_{\BZ} \Tr[g_{\bsl{k}}] d^2 k
= |\Ch|\ .
\label{eq:ideal_band_definition}
}

Following the geometric interpretation in
\refcite{Yu_2025_WL_ideal}, we regard
$\bsl{k}\mapsto P_{\bsl{k}}$ as a smooth map from the BZ to the Grassmannian
$\mathrm{Gr}(1,\mathbb{C}^{M})$ of rank-1 projectors.  When $\mathrm{Gr}(1,\mathbb{C}^{M})$ is equipped with the metric inherited from the
ambient space of Hermitian matrices,
the metric induced on momentum
space is precisely the quantum metric in \cref{eq:quantum_metric_projector}, then in flat Cartesian momentum coordinates, the corresponding Dirichlet functional $E_{\mathrm D}[P]$ becomes $\int_{\mathrm{BZ}} d^2k\Tr[g_{\bsl{k}}]
= \Tr\mathcal{G}.$ Therefore, $\Tr\mathcal{G}$ is exactly the Dirichlet functional of the projector map over BZ.
Interested readers may refer to \refcite{Yu_2025_WL_ideal} for more detailed derivations.

In this work, we restrict to 
\eq{
\Ch>0 \Leftrightarrow
\Tr[g_{\boldsymbol{k}}]=F_{\boldsymbol{k}}
\quad
\text{for every }\boldsymbol{k}\in\BZ,
\label{eq:chern_ideal_band_definition}
}
when studying Chern-ideal bands, since the case $\Ch<0$ is obtained by time-reversal of the Bloch states. Although the embedding is a choice of Bloch basis, the
pointwise quantum geometry generally depends on this choice. For a
change of embedded positions
$\boldsymbol{\xi}_{\alpha}\to\boldsymbol{\xi}'_{\alpha}$, define
\eq{
W_{\bsl k}
=
\operatorname{diag}
\left(
e^{-\ii\boldsymbol{k}\cdot
(\boldsymbol{\xi}'_1-\boldsymbol{\xi}_1)},
\ldots,
e^{-\ii\boldsymbol{k}\cdot
(\boldsymbol{\xi}'_N-\boldsymbol{\xi}_N)}
\right).
}
The orbital coefficient vector and projector transform as
\eq{
U'_{\bsl{k}} = W_{\bsl k}U_{\bsl k},
\qquad
P'_{\bsl k} = W_{\bsl k}P_{\bsl k}W^\dagger_{\bsl k}.
\label{eq:projector_embedding_change}
}
Because $W_{\bsl k}$ depends on momentum, $\partial_{k_i}P'_{\bsl k}$ contains additional terms involving $\partial_{k_i}W_{\bsl k}$. Consequently,
\eq{
g'_{ij}(\bsl{k}) = \frac{1}{2} \Tr\left[
\partial_{k_i}P'_{\bsl k}
\partial_{k_j}P'_{\bsl k}
\right]
}
generally differs from $g_{ij}(\bsl{k})$.
A uniform shift of all embedded positions $\bsl{\xi}_{\alpha}$ makes $W_{\bsl k}$ proportional to the identity matrix and therefore leaves the projector and its
quantum geometry unchanged. Relative shifts between different
orbital positions, however, can change the  quantum metric and Berry
curvature. The Chern number remains unchanged under a change of embedding.

\section{Isolated Chern-ideal bands in finite-band  tight-binding models with exponentially decaying hopping}
\label{sec:exp}

By \cref{thm:exp_decay_analyticity}, exponentially decaying hopping implies a real-analytic Bloch Hamiltonian and hence a real-analytic isolated-band projector.
Conversely, by \cref{thm:analytic_projector_parent_hamiltonian}, every real-analytic
rank-one projector satisfying $P_{\bsl k + \bsl G} = V_{\bsl{G}}P_{\bsl k}V_{\bsl{G}}^\dagger$ generates an exponentially decaying parent Hamiltonian. Therefore, it is sufficient in this section to study real-analytic ideal projectors with the appropriate embedding. The essential distinction is whether all embedded positions $\bsl{\xi}_{\alpha}$ coincide modulo the Bravais lattice, in which case $P_{\bsl k}$ is periodic under the shift of reciprocal lattice vectors (in short, $\bsl{G}$-periodic), or whether at least two embedded positions differ, in which case the projector obeys a nontrivial transformation rule under the shift of reciprocal lattice vectors.
We only focus on Chern-ideal bands with $|\Ch|=1$, as $|\Ch|>1$ Chern-ideal bands has been constructed in tight-binding models with exponentially decaying hopping \cite{Jian_2012,claassen2015position, Lee_2017,Mera_2021}.

\subsection{Review: Non-existence of isolated $|\Ch|=1$ Chern-ideal bands when all $\xi_{\alpha}$'s are the same modulo lattice vectors}
\label{app:no_ch_1_periodic}

The obstruction for $|\Ch|=1$ $\bsl{G}$-periodic ideal projectors with finite $N$ was previously formulated
in terms of momentum-space $\CP^{N-1}$ instantons and elliptic Bloch
functions, for which the Chern number is determined by the total pole
order in the BZ~\cite{Jian_2012,claassen2015position,Lee_2017,Jie2021IdealBands,Parker2023IdealBands,Mera_2021}.
We summarize the proof here.

Given an $N$-band tight-binding model with finite $N\geq2$ ($N=1$ cannot have $|\Ch|>0$ bands), a rank-$r$ projector lies in the Grassmannian $\Gr(r,N)$ \cite{Mera_2021}, which is defined as
\eq{
\Gr(r,N)
=
\left\{
P\in\dsC^{N\times N}
\mid
P^2=P,\quad
P^\dagger=P,\quad
\Tr[P]=r
\right\}.
}
A Grassmannian is a Kähler manifold, \ie, it has a complex structure $J$ that is compatible with its metric form $g$ and symplectic form $\omega$ \cite{Mera_2021}.
First, we note that the tangent space of $\Gr(r,N)$ at $P$ is
\eq{
T_P\Gr(r,N) = \left\{
\delta P \mid
\delta P^\dagger = \delta P,
P\delta P P = Q\delta P Q = 0
\right\},
}
where $Q=\mathbb{1}_N-P$.
Given two elements $\delta P_1$ and $\delta P_2$ in $T_P\Gr(r,N)$, the metric form is defined as
\eq{
g(\delta P_1,\delta P_2)
=
\Tr[\delta P_1\delta P_2],
}
and the symplectic form is defined as
\eq{
\omega(\delta P_1,\delta P_2)
=
\ii\Tr\left(P[\delta P_1,\delta P_2]\right).
}
The complex structure $J$ is defined as
\eq{
J(\delta P)
=
\ii[P,\delta P].
}
The intuitive way to understand this definition is the following.
Suppose $P=UU^\dagger$ and $Q=VV^\dagger$. Then
\eq{
\delta P
=
UAV^\dagger
+
VA^\dagger U^\dagger,
}
where $A$ is an $r\times(N-r)$ complex matrix.
Then
\eq{
J(\delta P) =
U(\ii A)V^\dagger +
V(\ii A)^\dagger U^\dagger,}
which means that $J$ is equivalent to $A\rightarrow\ii A$---exchanging the real and imaginary parts of $A$ up to a relative sign.
$J$ is compatible with $g$ and $\omega$ in the following sense:
\eqa{
J^2 & =  -1,\\
g(J(\delta P_1),J(\delta P_2)) & =  g(\delta P_1,\delta P_2), \\
g(J(\delta P_1),\delta P_2) &= \omega(\delta P_1, \delta P_2).
}
For the last line, 
\eq{
g(J(\delta P_1),\delta P_2) =
\ii\Tr\left(\delta P_2[P,\delta P_1]\right) =
\ii\Tr\left(P\delta P_1\delta P_2\right)
-\ii\Tr\left(P\delta P_2\delta P_1\right) =
\ii\Tr\left(P[\delta P_1,\delta P_2]\right)
= \omega(\delta P_1,\delta P_2).
}

For an isolated band of the tight-binding model with exponentially decaying hopping, its projector $P_{\bsl{k}}$ is a real-analytic map $\Phi$ from the two-torus $\BZ$ to $\Gr(1,N)$:
\eq{
\Phi(\bsl{k}) = P_{\bsl{k}}, 
}
since we have chosen all embedded positions to coincide modulo the Bravais lattice and thus $P_{\bsl{k}+\bsl{G}} = P_{\bsl{k}}$.
The map $\Phi$ pulls back the metric and symplectic forms to the two-torus.
At $P_{\bsl{k}}$, the two tangent vectors given by the pushforward are
\eq{
\delta P_1(\bsl{k}) = \partial_{k_x}P_{\bsl{k}},
\qquad
\delta P_2(\bsl{k}) = \partial_{k_y}P_{\bsl{k}}.
}
Thus, the pullbacks of the metric and symplectic forms are
\eq{
\left[\Phi^*g\right]_{ij}
= g\left(\partial_{k_i}P_{\bsl{k}},\partial_{k_j}P_{\bsl{k}}\right)
= \Tr\left[\partial_{k_i}P_{\bsl{k}}\partial_{k_j}P_{\bsl{k}}\right]
= 2\left[g_{\bsl{k}}\right]_{ij},}
and
\eq{
\left[\Phi^*\omega\right]_{ij}
= \omega\left(\partial_{k_i}P_{\bsl{k}},\partial_{k_j}P_{\bsl{k}}\right)
= \ii\Tr\left[
P_{\bsl{k}}
\left[
\partial_{k_i}P_{\bsl{k}},
\partial_{k_j}P_{\bsl{k}}
\right]\right]
= F_{ij}(\bsl{k}),
}
where $F_{\bsl{k}}\equiv F_{xy}(\bsl{k})$ is the Berry curvature.
The integration of the pulled-back symplectic form defines the degree of the map, which, with our convention, is the Chern number \cite{Jian_2012,Mera_2021}:
\eq{
\deg(\Phi) = \frac{1}{2\pi} \int_{\BZ}
\dd^2k F_{\bsl{k}} = \Ch.
}
As in the rest of the paper, let us focus on $\Ch>0$, so the ideal condition is
\eq{
F_{\bsl{k}}
=
\Tr[g_{\bsl{k}}].
}
By defining
\eq{
X_{\bsl{k} } = \partial_{k_x}P_{\bsl{k}} +J\left(\partial_{k_y}P_{\bsl{k}}\right) = \partial_{k_x}P_{\bsl{k}} +\ii [P_{\bsl{k}}, \partial_{k_y}P_{\bsl{k}}]
}
We have
\eqa{
\Tr[X_{\bsl{k} } X_{\bsl{k} } ]  & = \Tr[\partial_{k_x}P_{\bsl{k}}\partial_{k_x}P_{\bsl{k}}] + 2 \ii \Tr[ \partial_{k_x}P_{\bsl{k}} [P_{\bsl{k}}, \partial_{k_y}P_{\bsl{k}}] ] - \Tr[ [P_{\bsl{k}}, \partial_{k_y}P_{\bsl{k}}]    [P_{\bsl{k}}, \partial_{k_y}P_{\bsl{k}}]   ] \\
& = \Tr[\partial_{k_x}P_{\bsl{k}}\partial_{k_x}P_{\bsl{k}}] - 2 \ii \Tr[ P_{\bsl{k}} [ \partial_{k_x}P_{\bsl{k}}, \partial_{k_y}P_{\bsl{k}}] ] +2  \Tr[ P_{\bsl{k}} \partial_{k_y}P_{\bsl{k}} \partial_{k_y}P_{\bsl{k}} ] \\
& = \Tr[\partial_{k_x}P_{\bsl{k}}\partial_{k_x}P_{\bsl{k}}] - 2 \ii \Tr[ P_{\bsl{k}} [ \partial_{k_x}P_{\bsl{k}}, \partial_{k_y}P_{\bsl{k}}] ] +  \Tr[  \partial_{k_y}P_{\bsl{k}} \partial_{k_y}P_{\bsl{k}} ] \\
& = 2 \Tr[g_{\bsl{k}}] - 2 F_{\bsl{k}} = 0\ ,
}
which leads to
\eq{
\partial_{k_x}P_{\bsl{k}}
=
-J\left(\partial_{k_y}P_{\bsl{k}}\right) = - \ii [P_{\bsl{k}}, \partial_{k_y}P_{\bsl{k}}]\ .
}
Thus, with the complex momentum coordinate $k=k_x-\ii k_y$, the map $\Phi$ is $J$-holomorphic \cite{Jian_2012,Lee_2017,Jie2021IdealBands,Mera_2021,Parker2023IdealBands}.
(One can make an analogy to holomorphic functions in $\mathbb{C}$, $\partial_{k_x}\Phi(z)=-\ii\partial_{k_y}\Phi(z)$.)

Nonconstant holomorphic maps from a two-torus to 
$\Gr(1,N)\cong\CP^{N-1}$ cannot have degree one for any $N\geq2$ \cite{EellsWood1983,Jian_2012}.
We can understand this in the following way.
As mentioned above, let us only focus on $\Ch>0$, since the case $\Ch<0$ can be obtained by 
time-reversal.

Owing to $P_{\bsl{k}+\bsl{G}} = P_{\bsl{k}}$, we know $U_{\bsl{k}+\bsl{G}}=U_{\bsl{k}} e^{\ii \phi_{\bsl{k},\bsl{G}}}$ with $\phi_{\bsl{k},\bsl{G}}\in \dsR$.
Furthermore, there exists an atlas covering $\dsR^2$ such that $U_{\bsl{k}}$ is real-analytic in each patch.
Explicitly, for any $\bsl{k}_0$, we can choose an open neighborhood $\mathcal{U}$ on which $U_{\bsl{k}}$ is real-analytic.
In $\mathcal{U}$, the ideality gives
\eq{
\label{eq:partial_z_U}
(\partial_{k_x}-\ii\partial_{k_y})U_{\bsl{k}}
=
\lambda_{\bsl{k}}U_{\bsl{k}},
}
where $\lambda_{\bsl{k}}$ is a scalar.
It can be derived from
\eqa{
& \partial_{k_x}P_{\bsl{k}} = - \ii P_{\bsl{k}}  \partial_{k_y}P_{\bsl{k}} + \ii \partial_{k_y}P_{\bsl{k}} P_{\bsl{k}}  \\
& \Rightarrow (1-P_{\bsl{k}})\partial_{k_x}P_{\bsl{k}} =  \ii (1-P_{\bsl{k}}) \partial_{k_y}P_{\bsl{k}} P_{\bsl{k}} =  \ii (1-P_{\bsl{k}}) \partial_{k_y}P_{\bsl{k}} \\
&\Rightarrow (1-P_{\bsl{k}}) (\partial_{k_x}-\ii \partial_{k_y})P_{\bsl{k}} = 0\\
& \Rightarrow (1-P_{\bsl{k}}) (\partial_{k_x}-\ii \partial_{k_y})U_{\bsl{k}} = 0\ .
}
$\lambda_{\bsl{k}}$ can always be made nonzero by making $\mathcal{U}$ small enough, since $U_{\bsl{k}}\rightarrow U_{\bsl{k}}e^{\ii( a k_x + b k_y)}$ realizes $\lambda_{\bsl{k}} \rightarrow \lambda_{\bsl{k}} + \ii a + b$.

Now we show given a component $U_{{\bsl k},a}$, it can only be zero at isolated momentum points.
To see that, consider a generic zero $U_{{\bsl k}',a}=0$ at $\bsl{k}'$.
Then, we can always pick a small enough neighborhood such that another component $U_{{\bsl k},a'}\neq 0$ throughout the neighborhood.
Then, $U_{\bsl{k}}/U_{{\bsl k},a'}$ is holomorphic in the neighborhood. Consequently, the zero $U_{\bsl{k},a}/U_{{\bsl k},a'}$ at $\bsl{k}'$ must be isolated, as holomorphic functions can only have isolated zeros.
Since $U_{{\bsl k},a'}$ is nonzero in the neighborhood, the zero of $U_{\bsl{k},a}$ at $\bsl{k}_0$ is isolated.
Therefore, each component of $U_{\bsl{k}}$ can only have isolated zeros.
One can then choose one component of $U_{\bsl{k}}$, say $U_{\bsl{k},a}$, and obtain a meromorphic  function with 
\eq{w(k)=\frac{U_{\bsl{k}}}{U_{\bsl{k},a}}\ .
}
Owing to $U_{\bsl{k}+\bsl{G}}=U_{\bsl{k}} e^{\ii \phi_{\bsl{k},\bsl{G}}}$, $w(k)$ satisfies
\eq{
w(k+G) = w(k),
}
where
\eq{
k=k_x-\ii k_y, \quad G=G_x-\ii G_y.
}
Every component of $w(k)$ is meromorphic.

The K\"ahler potential on this K\"ahler manifold is given by $ w(k)^\dagger w(k)$. Then, let
\eq{
\beta_{\bsl{k}} = \frac{1}{\pi}
\partial_{k^*}\partial_k
\log\left[w(k)^\dagger w(k)\right].
}
In the domain of analyticity of $w(k)$
\eq{
\beta_{\bsl{k}}
= \frac{1}{\pi}\left[\frac{
\partial_{k^*}w(k)^\dagger
\partial_kv(k)
}{w(k)^\dagger w(k)
} - \frac{\left(\partial_{k^*}w(k)^\dagger w(k)\right)
\left(w(k)^\dagger\partial_kv(k)\right)
}{\left(w(k)^\dagger w(k)\right)^2
}\right].}
The Berry curvature reads

\eqa{
F_{\bsl{k}}
&= 
2\left[
\partial_{k^*}
\left(\frac{w(k)^\dagger}{\sqrt{w(k)^\dagger w(k)}}
\right) \partial_k
\left(\frac{w(k)}{\sqrt{w(k)^\dagger w(k)}}
\right) - \partial_k \left(
\frac{w(k)^\dagger}{\sqrt{w(k)^\dagger w(k)}}\right)\partial_{k^*}\left(\frac{w(k)}{\sqrt{w(k)^\dagger w(k)}}\right)\right]
\\
&= 2\left[\frac{\partial_{k^*}w(k)^\dagger\partial_kv(k)}{w(k)^\dagger w(k)} - \frac{\left(\partial_{k^*}w(k)^\dagger w(k)\right)
\left(w(k)^\dagger\partial_kv(k)\right)}{\left(w(k)^\dagger w(k)\right)^2}\right] = 2\pi\beta_{\bsl{k}}.}
Thus, if $k$ is not a pole of $w(k)$, then
\eq{
\beta_{\bsl{k}}
=
\frac{1}{2\pi}F_{\bsl{k}}.
}
If $k_i$ is a pole of $w(k)$, then in its neighborhood
\eq{
w(k)
=
(k-k_i)^{-q_i}m_i(k),
}
where $m_i(k)$ is holomorphic and $m_i(k_i)\neq0$.
Here $q_i$ is the maximal pole order among the components of $w(k)$ at $k_i$.
We have
\eqa{
\beta_{\bsl{k}}
&=
-q_i\frac{1}{\pi}
\partial_{k^*}\partial_k
\log|k-k_i|^2
+
\text{regular term}
\\
&=
-q_i\frac{1}{\pi}
\partial_{k^*}
\frac{1}{k-k_i}
+
\text{regular term}
\\
&=
-q_i\delta(\bsl{k}-\bsl{k}_i)
+
\text{regular term}.
}
Therefore,
\eq{\beta_{\bsl{k}} = \frac{1}{2\pi}F_{\bsl{k}} - \sum_iq_i\delta(\bsl{k}-\bsl{k}_i).}
The integration of $\beta_{\bsl{k}}$ over the first BZ is a total derivative of a periodic distribution on the BZ, and thus must be zero. Thus,
\eq{\Ch = \sum_iq_i.\label{eq:chern_number_pole_count}
}
The total pole order $\sum_iq_i$ cannot equal one, since a nonconstant elliptic function (\ie, doubly periodic meromorphic functions) cannot have exactly one simple pole in a fundamental domain~\cite{StoneGoldbart2009}. Therefore,
\eq{
\Ch = \sum_iq_i \geq 2.
}
Note that it is the total pole order $\sum_iq_i$, rather than every individual $q_i$, that must be no smaller than two.
Therefore, a Chern-ideal band with $\Ch>0$ in a tight-binding model with exponentially decayed hopping must have $\Ch>1$.
In other words, no $\Ch=1$ Chern-ideal band in this case.
Using the time-reversal operation, we know $\Ch=-1$ Chern-ideal bands also do not exist in this case.

Explicit constructions of $|\Ch|>1$ Chern-ideal bands have been obtained~\cite{Jian_2012,claassen2015position, Lee_2017,Mera_2021}.

\subsection{Existence and analytic construction of isolated $|\Ch|=1$ Chern-ideal bands when at least two $\xi_{\alpha}$'s are different}\label{subsub: construction for f at diff positions}

In the previous section, we considered a periodically embedded two-band Hilbert space, for which $\bsl{\xi}_{\alpha}$'s are all lattice vectors, so $P_{\bsl k + \bsl G}=P_{\bsl k}$.
This periodicity was the essential global assumption behind the nonexistence result for $|\mathrm{Ch}|=1$.
In this section, we instead consider the embedding where at least two $\xi_{\alpha}$'s are different modulo lattice vectors. The local relation between ideality and holomorphicity remains unchanged, but the transformation rule of the meromorphic vector is different under the shift of reciprocal lattice vectors. We explicitly construct a meromorphic function $f$ with exactly one simple pole and show that it defines an ideal band with $\Ch=1$, and its time-reversal partner gives $\Ch=-1$.

Let $\bsl a_1$ and $\bsl a_2$ be primitive vectors of a
two-dimensional Bravais lattice $\mathcal L
=
\left\{
m\bsl a_1+n\bsl a_2:
m,n\in\mathbb Z
\right\}.$
Let $\bsl b_1$ and $\bsl b_2$ be the corresponding primitive
reciprocal vectors, chosen such that $\bsl a_i\cdot\bsl b_j=2\pi\delta_{ij}$.
We choose the embedded positions
$\boldsymbol\xi_A$ and $\boldsymbol\xi_B$ and assume that
\eq{
\Delta\boldsymbol\xi
\equiv
\boldsymbol\xi_B-\boldsymbol\xi_A
\notin\mathcal L.
\label{eq:nontrivial_relative_embedding}}
We use the orbital Bloch basis, and the eigenvector of the corresponding isolated band reads
\eq{
U_{\bsl k}
=
\begin{pmatrix}
U_A(\bsl k)
\\
U_B(\bsl k)
\end{pmatrix}.
\label{eq:orbital_coefficient_spinor}
}
Owing to $P_{\bsl{k}+\bsl{G}} = V_{\bsl{G}}P_{\bsl{k}}V_{\bsl{G}}^\dagger$, we know
\eq{U_\alpha(\bsl k+\bsl G) =
e^{-i\bsl G\cdot\bsl\xi_\alpha}
U_\alpha(\bsl k) e^{\ii \phi_{\bsl{k},\bsl{G}}} \Leftrightarrow U_{\bsl k+\bsl G}
= V_{\bsl G}U_{\bsl k} e^{\ii \phi_{\bsl{k},\bsl{G}}}\ ,
\label{eq:component_coefficient_transformation}} where
$V_{\bsl G}
=
\begin{pmatrix}
e^{-i\bsl G\cdot\bsl\xi_A} & 0
\\
0 & e^{-i\bsl G\cdot\bsl\xi_B}
\end{pmatrix}.$
Let us now define
\eq{
w_B(\bsl k)
=
\frac{U_B(\bsl k)}{U_A(\bsl k)}\ ,
\label{eq:projective_coordinate_definition}
}
leading to
\eq{
U_{\bsl k}
=
\frac{1}{\sqrt{1+|w_B(\bsl k)|^2}}
\begin{pmatrix}
1
\\
w_B(\bsl k)
\end{pmatrix}.
\label{eq:normalized_spinor_from_f}
}
and
\eq{
P_{\bsl k} =
U_{\bsl k}U^\dagger_{\bsl k} =
\frac{1}{1+|w_B(\bsl k)|^2}
\begin{pmatrix}
1 & w_B(\bsl k)^*
\\
w_B(\bsl k) & |w_B(\bsl k)|^2
\end{pmatrix}.
\label{eq:projector_from_projective_coordinate}
}
From
\cref{eq:component_coefficient_transformation},
\eq{
w_B(\bsl k+\bsl G) =
\frac{
U_B(\bsl k+\bsl G)
}{
U_A(\bsl k+\bsl G)
} =
\frac{
e^{\ii \phi_{\bsl{k},\bsl{G}}}
e^{-i\bsl G\cdot\bsl\xi_B}
U_B(\bsl k)
}{
e^{\ii \phi_{\bsl{k},\bsl{G}}}
e^{-i\bsl G\cdot\bsl\xi_A}
U_A(\bsl k)
} =
e^{-i\bsl G\cdot
(\bsl\xi_B-\bsl\xi_A)}
w_B(\bsl k).
\label{eq:physical_embedding_f_transformation}
}
Write
\eq{\Delta\bsl{\xi}
\equiv
\bsl{\xi}_B-\bsl{\xi}_A
= \nu_1\bsl{a}_1+\nu_2\bsl{a}_2
\pmod{\mathcal{L}},
\qquad
\nu_1,\nu_2\in[0,1).}
Since $\bsl{b}_j\cdot\bsl{a}_i = 2\pi\delta_{ij},$
we get $\bsl{b}_1\cdot\Delta\bsl{\xi}
= 2\pi\nu_1
\pmod{2\pi}, \quad \bsl{b}_2\cdot\Delta\bsl{\xi}
= 2\pi\nu_2 \pmod{2\pi}.$
Therefore, let
\eq{ \eta_j \equiv e^{-i\bsl{b}_j\cdot\Delta\bsl{\xi}}
= e^{-2\pi i\nu_j}, \qquad j=1,2.
\label{eq:eta_j def}
}
So
\begin{align}
V_{\bsl{b}_j}
&=
\begin{pmatrix}
e^{-i\bsl{b}_j\cdot\bsl{\xi}_A} & 0
\\
0 & e^{-i\bsl{b}_j\cdot\bsl{\xi}_B}
\end{pmatrix} =
e^{-i\bsl{b}_j\cdot\bsl{\xi}_A}
\begin{pmatrix}
1 & 0
\\
0 & \eta_j
\end{pmatrix}.
\end{align}
Using \cref{eq:eta_j def}, \cref{eq:physical_embedding_f_transformation} therefore becomes
\eq{
w_B(\bsl k+\bsl b_1) = \eta_{1} w_B(\bsl k),
\qquad w_B(\bsl k+\bsl b_2)
= \eta_{2} w_B(\bsl k).
\label{eq:honeycomb_f_quasiperiodicity}}

Now we parameterize the momentum by a complex number. Let
\eq{\bsl{k} = \kappa_1\bsl{b}_1 + \kappa_2\bsl{b}_2,
\qquad \kappa_1,\kappa_2\in\mathbb{R}.
\label{eq:momentum_reciprocal_coordinates}}
Define
\eq{
k = k_x-\ii k_y,
\qquad B_1 = b_{1x}-\ii b_{1y},
\qquad
B_2
= b_{2x}-\ii b_{2y}.
\label{eq:complex_representatives}
}
Applying this complex representation to
\cref{eq:momentum_reciprocal_coordinates} gives
\eq{
k
=
k_x-\ii k_y
=
\kappa_1B_1+\kappa_2B_2.
\label{eq:complex_momentum_decomposition}
}
Define the normalized complex momentum coordinate
\eq{
z = \frac{k}{B_1}
= \frac{\kappa_1B_1+\kappa_2B_2}{B_1}
= \kappa_1
+ \kappa_2\frac{B_2}{B_1}
= \kappa_1+\kappa_2\omega,
\label{eq:normalized_complex_momentum}
}
where $\omega\equiv \frac{B_2}{B_1} = \frac{b_{2x}-\ii b_{2y}}{b_{1x}-\ii b_{1y}}$. We order $\bsl{b}_1$ and $\bsl{b}_2$ such that $\operatorname{Im}\omega>0$. To do so, notice that
\eq{
\omega =
\frac{b_{2x}-\ii b_{2y}}
     {b_{1x}-\ii b_{1y}} =
\frac{
(b_{2x}-\ii b_{2y})
(b_{1x}+\ii b_{1y})
}{
b_{1x}^2+b_{1y}^2
} =
\frac{b_{1x}b_{2x}+b_{1y}b_{2y}}{|\bsl{b}_1|^2}
- \ii\frac{b_{1x}b_{2y}-b_{1y}b_{2x}}{|\bsl{b}_1|^2}.}
Therefore, $\operatorname{Im}\omega
= -\frac{\det(\bsl{b}_1,\bsl{b}_2)}{|\bsl{b}_1|^2}$.
Since $\bsl{b}_1$ and $\bsl{b}_2$ are linearly independent,
$\det(\bsl{b}_1,\bsl{b}_2)\neq0$. We can order them such
that $\det(\bsl{b}_1,\bsl{b}_2)<0,$ so that $\operatorname{Im}\omega>0$.

Our construction is done by finding $f(z)$ for
\eq{w_B(\bsl{k})  = f(z) \equiv w_B(k) ,}
so that we have
\eq{ w(k) = \mat{1 \\ w_B(k) } = \mat{1 \\ f(z)}}
for $w(k)$ in \cref{eq:w_k}.
In the complex coordinate $z$, the first reciprocal
vector $\bsl b_1$ is represented by the number $1$, while
the second reciprocal vector $\bsl b_2$ is represented by
$\omega$. Indeed, shifting the momentum by $\bsl b_1$ changes
$\kappa_1\to\kappa_1+1$ while leaving $\kappa_2$ unchanged.
Hence,
\eq{
\bsl{k}\longmapsto\bsl{k}+\bsl{b}_1\quad\Longleftrightarrow\quad
z\longmapsto z+1;}
\eq{
\bsl{k}\longmapsto\bsl{k}+\bsl{b}_2
\quad\Longleftrightarrow\quad
z\longmapsto z+\omega.
}
Thus, Bloch momenta that differ by a reciprocal-lattice vector are
identified by $z \sim z+m+n\omega$, with $m,n\in\mathbb Z$.
The BZ is thus represented by the complex torus $\mathbb C/\Lambda$ with $\Lambda = \mathbb Z+\omega\mathbb Z.$
Therefore, the conditions are
\eq{f(z+1)=\eta_{1} f(z),
\qquad f(z+\omega)=\eta_{2} f(z).
\label{eq:honeycomb_f_complex_quasiperiodicity}}
Unlike the periodically embedded case, now $f$ does not return to the same value after a reciprocal-lattice translation.

In order to construct an isolated ideal band with
$\mathrm{Ch}=1$, we will show that it is sufficient to find a function $f$ satisfying:
\eq{
\boxed{
\begin{aligned}
&f:\mathbb C\longrightarrow\mathbb C\cup\{\infty\}
\text{ is meromorphic}.\\
&f(z+1)=\eta_{1} f(z)\quad f(z+\omega)=\eta_{2} f(z).\\
&f(z)\text{ has exactly one simple pole modulo }\Lambda.
\end{aligned}
}
\label{eq:quasiperiodic_meromorphic_conditions}
}
$P_{\bsl{k}+\bsl{G}} = V_{\bsl{G}} P_{\bsl{k}} V_{\bsl{G}}^\dagger$ is clearly satisfied due to the second condition, and the pole-counting relation in \cref{eq:chern_number_pole_count} ensures $\Ch=1$ due to the third condition. Now we give explicit proof on how the conditions ensure that the ideal condition is satisfied.

We first show the real analyticity of the associated
projector. 
Away from the poles of $f$, the function $f(z)$ is holomorphic and hence real-analytic in the real momentum coordinates $(k_x,k_y)$. Therefore, $P_{\bsl k}$ is real-analytic away from the poles, since $1+|f(z)|^2>0$. Therefore, it remains
to examine the behavior of $P_{\bsl k}$ at the simple pole.
Around a simple pole at generic $z=p$, there exists a nonzero constant
$c_p$ such that $f(z)=\frac{c_p + v(z-p)}{z-p}$ with $v(0) = 0 $.
Near $p$, introduce $\widetilde{f}(z) \equiv \frac{1}{f(z)} = \frac{z-p}{c_p+v(z-p)}.$
Therefore, $\widetilde{f}$ is holomorphic on a neighborhood of $p$
and satisfies $\widetilde{f}(p)=0.$
In terms of $\widetilde{f}$, the band projector is
\eq{
\label{eq:P_ftilde}
P_{\bsl{k}}=\frac{1}{1+|\widetilde{f}(z)|^2}
\begin{pmatrix}
|\widetilde{f}(z)|^2 & \widetilde{f}(z)\\
\widetilde{f}^*(z) & 1
\end{pmatrix}\ .
}
Since $\widetilde{f}(z)$ is holomorphic, both $\widetilde{f}(z)$ and $\widetilde{f}(z)^*$ are
real-analytic in $(k_x,k_y)$. 
So pole of $f=U_B/U_A$ is not a singularity of the band projector. Moreover,
$1+|\widetilde{f}(z)|^2>0$, so every matrix element of
$P_{\bsl{k}}$ is real-analytic near $p$.
In particular, at $p$, the projector takes the form $\begin{pmatrix}
0 & 0
\\
0 & 1
\end{pmatrix}$
.
Since the same argument applies at every lattice-translated pole, $P_{\bsl{k}}$ is real-analytic on all of $\dsR^2$.
We next verify that $\Tr[g_{\bsl{k}}]=F_{\bsl{k}}$
for every $\bsl{k}\in\dsR^2$.
Let
$Q_{\bsl{k}}=\mathbb{1}_2-P_{\bsl{k}}$.
Away from the pole of $f$, we have $\Tr[g_{\bsl{k}}]-F_{\bsl{k}} = \norm{
Q_{\bsl{k}} \left(\partial_{k_x}-\ii\partial_{k_y}
\right) U_{\bsl{k}}
}^2=0$.
Since $\Tr[g_{\bsl{k}}]-F_{\bsl{k}}$ is real-analytic in $\mathbb{R}^2$, $\Tr[g_{\bsl{k}}]-F_{\bsl{k}} = 0$ should also hold at the poles of $f$.

In the following, we will show that the following ansatz of $f$  satisfies the conditions in \cref{eq:quasiperiodic_meromorphic_conditions} with a proper choice of $a$ and $\delta$:
\eq{
f(z) = e^{az} \frac{
\Theta(z-p-\delta\mid\omega)
}{ \Theta(z-p\mid\omega) },
\label{eq:theta_function_ansatz}}
where $p$ determines the positions of the poles and $p+\delta$ the positions of the zeros.
Here $\Theta(z\mid\omega)$ denotes the odd Jacobi theta function~\cite{Kharchev_2015}:
\eq{\Theta(z\mid\omega)
=2\sum_{n=0}^{\infty}(-1)^n
e^{\pi i\omega\left(n+\frac{1}{2}\right)^2}\sin\left[(2n+1)\pi z\right]\ ,\label{eq:Jacobi_1}}
and an equivalent alternative form is
\eq{
\Theta(z\mid\omega) = -i
\sum_{n\in\mathbb Z}
(-1)^n e^{i\pi\omega\left(n+\frac12\right)^2}
e^{i2\pi\left(n+\frac12\right)z}.
\label{eq:Jacobi_2}
}

\subsubsection{Review on the odd Jacobi theta function}

We first review the relevant properties of the odd Jacobi theta function \cite{Haldane_1985_LJ_function,Kharchev_2015,Slavnov_2020,kulkarni2023formfactorlocaloperators}. Readers familiar with them may skip directly to the verification of the
ansatz. In our convention, $\Theta(z\mid\omega)$ is
entire in $z$ and satisfies
\eq{
\Theta(z+1\mid\omega)
= -\Theta(z\mid\omega),
\qquad \Theta(z+\omega\mid\omega) = -e^{-\pi\ii\omega-2\pi\ii z}
\Theta(z\mid\omega),
\label{eq:odd_theta_shift}
}
and has exactly one simple zero at $z=0$ modulo
$\Lambda=\mathbb Z+\omega\mathbb Z$.
For completeness, we verify these standard properties now.

To see the equivalence of \cref{eq:Jacobi_2}, we note
\eqa{
\Theta(z\mid\omega)
&= 2\sum_{n=0}^{\infty}
(-1)^n
e^{\pi i\omega\left(n+\frac12\right)^2}
\sin\left[(2n+1)\pi z\right] =
2\sum_{n=0}^{\infty}
(-1)^n
e^{\pi i\omega\left(n+\frac12\right)^2}
\frac{
e^{2\pi i\left(n+\frac12\right)z}
- e^{-2\pi i\left(n+\frac12\right)z}
}{2i} \\
&=
-i\sum_{n=0}^{\infty}
(-1)^n
e^{\pi i\omega\left(n+\frac12\right)^2}
e^{2\pi i\left(n+\frac12\right)z}
+i\sum_{n=0}^{\infty}
(-1)^n
e^{\pi i\omega\left(n+\frac12\right)^2}
e^{-2\pi i\left(n+\frac12\right)z}. \\
&=
-i\sum_{m=0}^{\infty}
(-1)^m
e^{\pi i\omega\left(m+\frac12\right)^2}
e^{2\pi i\left(m+\frac12\right)z}
-i\sum_{m=-\infty}^{-1}
(-1)^m
e^{\pi i\omega\left(m+\frac12\right)^2}
e^{2\pi i\left(m+\frac12\right)z}\\
&=
-i\sum_{m\in\mathbb Z}
(-1)^m
e^{\pi i\omega\left(m+\frac12\right)^2}
e^{2\pi i\left(m+\frac12\right)z}.
\label{eq:theta_bilateral_series}
}

To show $\Theta(z \mid \omega )$ is entire in $z$ (\ie, holomorphic for every $z\in\mathbb C$), we  consider the $n$th term of the series
\eq{T_n(z) = 2(-1)^n
e^{\pi i\omega\left(n+\frac12\right)^2}
\sin \left[(2n+1)\pi z\right].
\label{eq:theta_nth_term}}
For a fixed $n$, the factor
$e^{\pi i\omega(n+\frac12)^2}$ is independent of $z$, while
$\sin[(2n+1)\pi z]$ is entire in $z$. Therefore, each $T_n(z)$ is
entire. 
To show that the infinite sum is also entire, it is sufficient to prove that the series \cref{eq:Jacobi_1}  converges locally uniformly, which we show below.
Let $\omega=\omega_R+i\omega_I $, $\omega_I>0$.
First,
$ \left|e^{\pi i\omega\left(n+\frac12\right)^2}\right|=
e^{-\pi\omega_I\left(n+\frac12\right)^2}.$
Next, consider the convergence of the series on any compact subset $K\subset\mathbb{C}$.
Since $K$ is compact, there is a supremum $C_K=\sup_{z\in K} |\Im z|$.
Then, for every $z\in K$ 
we have $\left| \sin \left[(2n+1)\pi z\right] \right| \le e^{(2n+1)\pi C_K}$ which is $z$ independent.
Therefore,
\eq{|T_n(z)| \le 2e^{-\pi\omega_I\left(n+\frac12\right)^2} e^{(2n+1)\pi C_K}\equiv M_n.}
It remains to show that the
$\sum_{n=0}^{\infty}M_n$ converges. We take the ratio
\eq{\frac{M_{n+1}}{M_n} = \exp\left[-2\pi\omega_I(n+1)+2\pi C_K \right].}
Since $\omega_I>0$, $\lim_{n\to\infty}
\frac{M_{n+1}}{M_n} = 0,$ which is smaller than 1. Hence $\sum_{n=0}^{\infty}M_n$ converges absolutely by the ratio test. Since $|T_n(z)|\le M_n$ for $z\in K$, it follows that
$\sum_{n=0}^{\infty}T_n(z)$ converges uniformly in the compact subset $K$.
Since $K$ is chosen arbitrarily, we know the series converges locally uniformly.
Weierstrass's theorem implies that given a sequence of entire functions such that (i) any finite partial sum is entire and (ii) the sequence converges locally uniformly, the sequence is entire. 
Therefore, $\Theta(z\mid\omega)$ is entire, which means $\Theta(z\mid\omega)$ does not diverge for any $z\in\dsC$.

To see \cref{eq:odd_theta_shift} explicitly, we note
\eqa{\Theta(z+1\mid\omega)
&=
2\sum_{n=0}^{\infty}
(-1)^n
e^{\pi i\omega\left(n+\frac12\right)^2}
\sin \left((2n+1)\pi(z+1)\right) =
2\sum_{n=0}^{\infty}
(-1)^n
e^{\pi i\omega\left(n+\frac12\right)^2}
\sin \left((2n+1)\pi z+(2n+1)\pi\right)\\
&= 
-2\sum_{n=0}^{\infty}
(-1)^n
e^{\pi i\omega\left(n+\frac12\right)^2} \sin\left[(2n+1)\pi z\right] =
-\Theta(z\mid\omega),
}
and
\eqa{
\Theta(z+\omega\mid\omega)
&=
-i \sum_{n\in\mathbb Z}
(-1)^n e^{\pi i\omega\left(n+\frac12\right)^2}
e^{2\pi i\left(n+\frac12\right)(z+\omega)} = -i\sum_{n\in\mathbb Z} (-1)^n e^{\pi i\omega\left(n+\frac12\right)^2} e^{2\pi i\left(n+\frac12\right)z} e^{2\pi i\omega\left(n+\frac12\right)}
\\
&=
-i \sum_{n\in\mathbb Z}
(-1)^n e^{\pi i\omega
\left[ \left(n+\frac32\right)^2-1 \right]} e^{2\pi i\left(n+\frac12\right)z} = -i e^{-\pi i\omega}\sum_{n\in\mathbb Z} (-1)^n e^{\pi i\omega\left(n+\frac32\right)^2} e^{2\pi i\left(n+\frac12\right)z}
\\
&=
-i e^{-\pi i\omega}
\sum_{m\in\mathbb Z}
(-1)^{m-1} e^{\pi i\omega\left(m+\frac12\right)^2}
e^{2\pi i\left(m-\frac12\right)z}
\\
&=
i e^{-\pi i\omega}
\sum_{m\in\mathbb Z}
(-1)^m
e^{\pi i\omega\left(m+\frac12\right)^2} e^{-2\pi iz} e^{2\pi i\left(m+\frac12\right)z} = i e^{-\pi i\omega-2\pi iz}
\sum_{m\in\mathbb Z} (-1)^m e^{\pi i\omega\left(m+\frac12\right)^2}
e^{2\pi i\left(m+\frac12\right)z}
\\
&=
-e^{-\pi i\omega-2\pi iz}
\Theta(z\mid\omega).
}

We now show that $\Theta$ has exactly one zero at $z=0$ modulo the lattice. Let $q=e^{\pi i\omega}$ with $\operatorname{Im}\omega>0$, then $|q|<1$.
\Cref{eq:Jacobi_2} becomes
\eq{
\Theta(z\mid\omega) =
-iq^{1/4}e^{\pi iz}
\sum_{n\in\mathbb Z}
(-1)^n
q^{n^2+n}
e^{2\pi inz} =
-iq^{1/4}e^{\pi iz}
\sum_{n\in\mathbb Z}
q^{n^2}
\left(-q e^{2\pi iz}\right)^n.
}
The standard Jacobi triple-product identity is \cite{Jacobi1829}
\eq{
\sum_{n\in\mathbb Z}
x^{n^2}y^{2n}
=
\prod_{m=1}^{\infty}
\left(1-x^{2m}\right)
\left(1+x^{2m-1}y^2\right)
\left(1+x^{2m-1}y^{-2}\right).
}
Choose $x=q$ and $y^2=-q e^{2\pi iz}$ here, then
\eq{
\sum_{n\in\mathbb Z}
q^{n^2}
\left(-q e^{2\pi iz}\right)^n
=
\prod_{m=1}^{\infty}
\left(1-q^{2m}\right)
\left(1-q^{2m}e^{2\pi iz}\right)
\left(1-q^{2m-2}e^{-2\pi iz}\right).
}
Separate the $m=1$ term from the last product:
\eq{
\Theta(z\mid\omega) =
-iq^{1/4}e^{\pi iz}
\left(1-e^{-2\pi iz}\right) \times
\prod_{m=1}^{\infty}
\left(1-q^{2m}\right)
\left(1-q^{2m}e^{2\pi iz}\right)
\left(1-q^{2m}e^{-2\pi iz}\right).}
Since $-i e^{\pi iz}
\left(1-e^{-2\pi iz}\right) =
-i\left(e^{\pi iz}-e^{-\pi iz}\right) =
2\sin(\pi z),$
\eq{
\Theta(z\mid\omega)
=
2q^{1/4}\sin(\pi z)
\prod_{m=1}^{\infty}
\left(1-q^{2m}\right)
\left(1-q^{2m}e^{2\pi iz}\right)
\left(1-q^{2m}e^{-2\pi iz}\right),
\qquad
q=e^{\pi i\omega}.
\label{eq:produc_representation}
}
Because $|q|<1$, the infinite products in
\cref{eq:produc_representation} converge locally uniformly in $z$. Indeed, for any compact set $K\subset\mathbb C$, let
$C_K=\sup_{z\in K}|\operatorname{Im}z|$. Then
\eq{ \sum_{m=1}^{\infty}
\sup_{z\in K}
\left| q^{2m}e^{\pm2\pi\ii z} \right|
\leq e^{2\pi C_K}
\sum_{m=1}^{\infty}|q|^{2m}
<\infty.} 
Moreover, $\sum_{m=1}^{\infty}|q|^{2m}<\infty$.
Therefore, by the standard convergence theorem for infinite products, the products converge locally uniformly to holomorphic functions and are nonzero wherever none of their individual factors vanishes. Hence the infinite product introduces no additional zeros. The zeros of $\Theta(z\mid\omega)$ therefore arise from one of the $z$-dependent product factors.
For the first factor,
\eq{\sin(\pi z)=0 \quad\Longleftrightarrow\quad z=\ell, \qquad \ell\in\mathbb Z.}
For the second one,
\eqa{1-q^{2m}e^{2\pi iz}=0
&\quad\Longleftrightarrow\quad
e^{2\pi im\omega}e^{2\pi iz}=1
\\
&\quad\Longleftrightarrow\quad
e^{2\pi i(z+m\omega)}=1
\\
&\quad\Longleftrightarrow\quad
z=\ell-m\omega,
\qquad
\ell\in\mathbb Z,
\quad
m\in\mathbb N.
}
Lastly, for the third one,
\eqa{
1-q^{2m}e^{-2\pi iz}=0
&\quad\Longleftrightarrow\quad
e^{2\pi im\omega}e^{-2\pi iz}=1
\\
&\quad\Longleftrightarrow\quad
e^{2\pi i(m\omega-z)}=1
\\
&\quad\Longleftrightarrow\quad
z=m\omega-\ell,
\qquad
\ell\in\mathbb Z,
\quad
m\in\mathbb N.
}
Combining the three cases, the zero set is
\eq{
\Theta(z\mid\omega)=0
\quad\Longleftrightarrow\quad
z=\ell+m\omega,
\qquad
\ell,m\in\mathbb Z.
}
Equivalently,
\eq{
\Theta(z\mid\omega)=0
\quad\Longleftrightarrow\quad
z\in\Lambda,
\qquad
\Lambda=\mathbb Z+\omega\mathbb Z.
}
Therefore, $\Theta(z\mid\omega)$ has exactly one zero modulo $\Lambda$. 

Next, we show that each zero of $\Theta$ is simple.
To see it, define
\eq{\mathcal R(z)\equiv \prod_{m=1}^{\infty}
\left(1-q^{2m}\right) \left(1-q^{2m}e^{2\pi iz}\right)
\left(1-q^{2m}e^{-2\pi iz}\right), \qquad |q|<1,}
so \cref{eq:produc_representation}
becomes
\eq{
\Theta(z\mid\omega) =
2q^{1/4}\sin(\pi z)\mathcal R(z).
}
At $z=0$, we have $\Theta(0\mid\omega)=0$
and $\mathcal R(0)= \prod_{m=1}^{\infty}\left(1-q^{2m}\right)^3 \neq0$, by the convergence and nonvanishing argument above.
Differentiating,
\eq{
\Theta'(z\mid\omega) =
2q^{1/4}
\left[\pi\cos(\pi z)\mathcal R(z) + \sin(\pi z)\mathcal R'(z)\right],
}
hence $\Theta'(0\mid\omega)
= 2\pi q^{1/4}\mathcal R(0)\neq0.$ Therefore,
\eq{
\Theta(0\mid\omega)=0,\qquad \Theta'(0\mid\omega)\neq0,
}
so $z=0$ is a simple zero of $\Theta(z\mid\omega)$.

\subsubsection{Verification of the Ansatze}

We now show that the ansatze \cref{eq:theta_function_ansatz} with proper $a$ and $\delta$ satisfies the conditions in  \cref{eq:quasiperiodic_meromorphic_conditions}.

We first determine $a$ and $\delta$ in $f(z)$, using $f(z+1)=\eta_1 f(z)$ and $f(z+\omega)=\eta_2 f(z)$.
First, using Eq.~\eqref{eq:odd_theta_shift},
\eq{
f(z+1) = e^{a(z+1)}
\frac{ -\Theta(z-p-\delta\mid\omega)
}{-\Theta(z-p\mid\omega)} = e^a f(z).
\label{eq:f_shift_one_general}}
We want $f(z+1)=\eta_1 f(z)$, so choose $a=-2\pi i\nu_1.$ Second, using Eq.~\eqref{eq:odd_theta_shift} again,
\eq{
f(z+\omega)
= e^{a(z+\omega)}
\frac{-e^{-\pi i\omega-2\pi i(z-p-\delta)}\Theta(z-p-\delta\mid\omega)}
{-e^{-\pi i\omega-2\pi i(z-p)}\Theta(z-p\mid\omega)}
= e^{a\omega}e^{2\pi i\delta}e^{az}
\frac{\Theta(z-p-\delta\mid\omega)}
{\Theta(z-p\mid\omega)}
= e^{a\omega}e^{2\pi i\delta}f(z)}
To obtain $f(z+\omega)=\eta_2 f(z)$, choose $\delta
=\nu_1 \omega - \nu_2.$ Indeed,
\eq{e^{a\omega}e^{2\pi i\delta} =
e^{-2\pi i\nu_1\omega}
e^{2\pi i(\nu_1\omega-\nu_2)} =
e^{-2\pi i\nu_2} = \eta_2.\label{eq:delta_gives_eta}}
We have thus obtained the explicit function
\eq{ f(z) =
e^{-2\pi i\nu_1 z}
\frac{
\Theta\left(z-p-\nu_1\omega+\nu_2\mid\omega
\right)}{\Theta\left(z-p\mid
\omega\right)}.\label{eq:explicit_quasiperiodic_f}}
So by construction, it satisfies the condition in \cref{eq:honeycomb_f_complex_quasiperiodicity}, \ie, the second condition in \cref{eq:quasiperiodic_meromorphic_conditions}.

\medskip

We now show that $f$ has only one simple pole modulo $\Lambda$ [the third condition of \cref{eq:quasiperiodic_meromorphic_conditions}].
Recall that $\Theta(z \mid \omega )$ is entire in $z$ with exactly one simple zero at $z=0$ modulo $\Lambda$. 
As a result, for \cref{eq:explicit_quasiperiodic_f},  the denominator gives 
simple poles at $z=p + \dsZ + \omega \dsZ$, whereas the numerator gives simple zeros at $z=p+\delta + \dsZ + \omega \dsZ$ with $\delta=\nu_1\omega-\nu_2$.
Suppose, for contradiction, that
\eq{
\nu_1\omega-\nu_2
= m+n\omega, \qquad
m,n\in\mathbb Z.
}
Comparing their coefficients therefore gives $\nu_1=n$ and $-\nu_2=m$.
Because $\nu_1,\nu_2\in[0,1)$ and $m,n\in\mathbb Z$, this is
possible only if $\nu_1=\nu_2=0$, which means $\Delta\bsl{\xi}\in\mathcal{L}$, contrary to our embedding assumption in \cref{eq:nontrivial_relative_embedding}. Therefore, the simple zeros and poles of $f(z)$ are distinct.
Thus, clearly  $f(z)$ has only one simple pole modulo $\Lambda$.
Therefore, $f$ satisfies all three conditions in \cref{eq:quasiperiodic_meromorphic_conditions}, serving as the construction of an isolated $\Ch=1$ Chern-ideal bands in a 2-band model with exponentially decaying hopping.

Since the Berry curvature has to have zeros for an isolated band of a 2-band model \cite{Mera_2021}, our ideal construction should have zeros for $\Tr[g_{\boldsymbol{k}}]$.
Let us verify this.
From the definition of the quantum metric,
\eq{
\Tr[g_{\boldsymbol{k}}]=0
\quad\Longleftrightarrow\quad
\partial_{k_x}P_{\bsl{k}}=\partial_{k_y}P_{\bsl{k}}=0.
\label{eq:metric_zero_real_derivatives}
}
Recall that $z=\frac{k_x-\ii k_y}{B_1}$, $B_1=b_{1x}-\ii b_{1y}.$
The corresponding complex derivatives are $\partial_z
=
\frac{B_1}{2} \left(\partial_{k_x}+\ii\partial_{k_y}
\right)$, and $\partial_{z^*}
= \frac{B_1^*}{2}
\left(\partial_{k_x}-\ii\partial_{k_y}
\right)$.
Since this is an invertible linear change of derivatives,
\eq{
\partial_{k_x}P_{\bsl{k}}=\partial_{k_y}P_{\bsl{k}}=0
\quad\Longleftrightarrow\quad
\partial_zP_{\bsl{k}}=\partial_{z^*}P_{\bsl{k}}=0.
}
Moreover, because $P_{\bsl{k}}^\dagger=P_{\bsl{k}}$, $\left(\partial_zP_{\bsl{k}}\right)^\dagger
= \partial_{z^*}P_{\bsl{k}}.$ Thus,
\eq{
\Tr[g_{\boldsymbol{k}}]=0
\quad\Longleftrightarrow\quad
\partial_zP_{\bsl{k}}=0.
\label{eq:metric_zero_dzP}
}
Away from the pole of $f$, $P_{\bsl{k}}$ from \cref{eq:projector_from_projective_coordinate}
and $\partial_zf(z)^*=0$, we obtain
\eq{
\partial_z P_{\bsl{k}}
= \frac{\partial_zf(z)}{(1+|f(z)|^2)^2}
\begin{pmatrix}
-f(z)^* & -(f(z)^*)^2\\
1 & f(z)^*
\end{pmatrix}.
\label{eq:dzP_from_dzf}
}
Since the matrix on the right-hand side is nonzero,
\eq{
\partial_zP_{\bsl{k}}=0
\quad\Longleftrightarrow\quad
\partial_zf(z)=0.
\label{eq:dzP_zero_dzf_zero}
}
At the pole $z=p$, we instead introduce $\widetilde{f}(z) \equiv \frac{1}{f(z)} = \frac{z-p}{c_p+v(z-p)}$ with $c_p\neq 0$ and $v(0)=0$, and use the corresponding expression for the
projector in \cref{eq:P_ftilde}.
The upper-right element of $\partial_zP_{\bsl{k}}$ is then $\left[\partial_zP_{\bsl{k}}\right]_{12}
= \partial_z\widetilde{f}(z)$, which is nonzero at $z=p$.
Therefore, the pole is not a zero of the quantum metric. 
Combining the above results, the zeros of $\Tr[g_{\boldsymbol{k}}]$ happens precisely when $\partial_zf(z)=0$.

We now just need to show that $\partial_z f$ has zeros.
Let $f'(z)\equiv\partial_z f(z)$.
Since the multipliers of $f$ after shifting reciprocal lattice vectors are constant,
\eq{ f'(z+1)=\eta_1f'(z),
\quad f'(z+\omega)=\eta_2f'(z).}
Consequently, the logarithmic derivative $\partial_z f'/f'$ is periodic with periods $1$ and $\omega$. Choose a fundamental parallelogram $\mathcal F$ whose boundary contains no zeros or poles of $f'$.
The contributions from opposite sides of $\partial\mathcal F$ cancel, so the argument principle gives
\eq{
N_{\mathrm z}-N_{\mathrm p}
= \frac{1}{2\pi\ii}
\oint_{\partial\mathcal F}
\frac{\partial_zf'(z)}{f'(z)}dz
= 0,
\label{eq:df_zero_pole_count}}
where the number of zeros ($N_{\mathrm z}$) and the number of poles ($N_{\mathrm p}$) are counted with multiplicity. Near the unique simple pole $p$ of $f$, $f(z)
= \frac{c_{p}+v(z-p)}{z-p}$, and hence
\eq{f'(z) = -\frac{c_{p}}{(z-p)^2}
+ O(1).}
Thus, $f'$ has one double pole and no other poles in a
fundamental domain, so $N_{\mathrm p}=2$. It follows from
\cref{eq:df_zero_pole_count} that $N_{\mathrm z}=2$. Therefore, $\partial_z f(z)$ has exactly two zeros, counted with multiplicity. So $\Tr[g_{\boldsymbol{k}}]$ vanishes at some
momenta.

\medskip

\section{Nonexistence of isolated and critical Chern-ideal bands in finite-band tight-binding models with finite-range hopping}
The finite-range hopping means that the hopping matrix $t_{\bsl{R}}$ in \cref{eq:h_from_t_general_embedding} is nonzero only for a finite number of $\bsl{R}$.
Here we have renamed the previous summation variable
$\bsl{\Delta R}$ as $\bsl{R}$ for simplicity.

\subsection{Isolated Chern-ideal bands}
\label{subsec:Isolated Chern Ideal Bands}

For this proof, we need to first rewrite the expression of the Hamiltonian.
For the hopping from the orbital $\alpha'$ to the orbital $\alpha$ with a lattice vector $\bsl{R}$, we define the embedding-dependent Fourier
displacement
\eq{
\bsl{\rho}
= \bsl{R} + \bsl{\xi}_{\alpha}
- \bsl{\xi}_{\alpha'}.
\label{eq:rho_general_embedding}
}
Finite bands and finite-range hopping together imply that only finitely many triples
$(\alpha,\alpha',\bsl{R})$ satisfy
$\left[t_{\bsl{R}}\right]_{\alpha\alpha'}\neq0$.
Consequently, the set
\eq{
\mathcal D
= \left\{
\bsl{R} + \bsl{\xi}_{\alpha} - \bsl{\xi}_{\alpha'}
\mid \bsl{R}\in\mathcal L, 
\alpha,\alpha'\in\{1,\ldots,N\},
\left[t_{\bsl{R}}\right]_{\alpha\alpha'}\neq0
\right\}
\label{eq:embedding_displacement_set}
}
is finite, where we recall $\mathcal{L}$ is the two-dimensional Bravais lattice. Different orbital pairs may give the same displacement $\bsl{\rho}$. We therefore group all terms with the same $\bsl{\rho}$ into one element in $\mathcal{D}$. For each $\bsl{\rho}\in\mathcal D$, define the $N\times N$ matrix $A_{\bsl{\rho}}$ entrywise by
\eq{\left[A_{\bsl{\rho}}\right]_{\alpha\alpha'} =
\begin{cases}
\left[t_{\bsl{R}}\right]_{\alpha\alpha'},
&
\text{if there exists }\bsl{R}\in\mathcal L \text{ such that } \bsl{R} + \bsl{\xi}_{\alpha} - \bsl{\xi}_{\alpha'} = \bsl{\rho},\\
0, 
&
\text{otherwise}.
\end{cases}
\label{eq:finite_hopping_matrix}}
The Bloch Hamiltonian can therefore be written as the finite Fourier sum
\eq{
h(\bsl{k}) = \sum_{\bsl{\rho}\in\mathcal D}
e^{-\ii\bsl{k}\cdot\bsl{\rho}}
A_{\bsl{\rho}}. \label{eq:general_embedding_finite_fourier_sum}
}
Notice that $\bsl{\rho}\in\mathcal{D}$ need not be a Bravais-lattice
vector. The proof below uses only that $\mathcal D$ is finite. The transformation rule of \cref{eq:general_embedding_finite_fourier_sum} under the shift of reciprocal lattice vectors is the same as  \cref{eq:generalEmbedding}.

\begin{theorem}
\label{thm:finite_range_isolated_no_go}
A finite-band $h(\bsl{k})$ with finite-range hopping cannot realize an isolated Chern-ideal band.
\end{theorem}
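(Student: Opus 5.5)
We argue by contradiction, following the outline already sketched in the main text. Suppose $h(\bsl{k})$ has the finite Fourier form \cref{eq:general_embedding_finite_fourier_sum} and an isolated band $E_{\bsl{k}}$ with $\Ch>0$ and $\Tr[g_{\bsl{k}}]=F_{\bsl{k}}$ everywhere. Since $\Ch>0$, there is a $\bsl{k}_0$ with $\Tr[g_{\bsl{k}_0}]>0$.

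By \cref{thm:exp_decay_analyticity} and \cref{thm:P_smooth_at_Isolated}, $P_{\bsl{k}}$ is real-analytic near $\bsl{k}_0$. Ideality then gives \cref{eq:partial_z_U}, so choosing a component $a$ with $U_{\bsl{k}_0,a}\neq0$ yields a holomorphic vector $w(k)=U_{\bsl{k}}/U_{\bsl{k},a}$, with $w_a=1$, on a small square $I_x\times I_y$ around $\bsl{k}_0$.

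The eigen-equation $h(\bsl{k})w=E_{\bsl{k}}w$ lets us eliminate $E_{\bsl{k}}=[h w]_a$. For each $\alpha\neq a$ this gives the identity $\sum_{\bsl\rho\in\mathcal D}e^{-\ii\bsl k\cdot\bsl\rho}C_{\alpha,\bsl\rho}(k)=0$, where $C_{\alpha,\bsl\rho}(k)=[A_{\bsl\rho}w]_\alpha-w_\alpha[A_{\bsl\rho}w]_a$ is holomorphic in $k=k_x-\ii k_y$ alone.

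The key step, and the main obstacle, is to separate this sum term by term. I would complexify: replace $(k_x,k_y)$ by independent complex $(K_x,K_y)$ and set $Y=K_x-\ii K_y$, $W=K_x+\ii K_y$. The left side then becomes $\sum_{\bsl\rho}e^{-\ii W\bar\rho/2}\,e^{-\ii Y\rho/2}C_{\alpha,\bsl\rho}(Y)$, with $\rho=\rho_x+\ii\rho_y$. It is holomorphic in $(Y,W)$ on a polydisc and vanishes on the totally real slice $W=\bar Y^{*}$-type real locus. By the identity theorem it therefore vanishes identically on the polydisc. Now fix $Y$. The expression is a finite exponential polynomial in $W$ with distinct frequencies $\bar\rho$, because the $\bsl\rho$ are distinct. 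Such exponentials are linearly independent; to see this, use a Vandermonde/Wronskian argument, or differentiate repeatedly in $W$ at one point. Hence each coefficient vanishes, and $C_{\alpha,\bsl\rho}\equiv0$. Finiteness of $\mathcal D$ is used exactly here.

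With $C_{\alpha,\bsl\rho}\equiv0$ we get $A_{\bsl\rho}w(k)=\lambda_{\bsl\rho}(k)w(k)$. Each $\lambda_{\bsl\rho}$ is continuous (indeed holomorphic) and takes values in the finite spectrum of $A_{\bsl\rho}$, so it is constant on the connected square.

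Now suppose $w$ is nonconstant. Since $w_a=1$, two values $v_1=w(k_1)$ and $v_2=w(k_2)$ are linearly independent. Both are simultaneous eigenvectors of all $A_{\bsl\rho}$ with the same eigenvalues $\lambda_{\bsl\rho}$. Then for every $\bsl{k}$ in the square, $h(\bsl{k})v_i=\bigl(\sum_{\bsl\rho}e^{-\ii\bsl k\cdot\bsl\rho}\lambda_{\bsl\rho}\bigr)v_i$. Evaluating at $\bsl{k}_1$ shows that this scalar equals $E_{\bsl{k}_1}$. So $E_{\bsl{k}_1}$ has multiplicity at least two, which contradicts isolation.

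Hence $w$ is constant on the square. Then $P_{\bsl{k}}$ is constant there, so $\Tr[g_{\bsl{k}_0}]=0$, contradicting the choice of $\bsl{k}_0$. The case $\Ch<0$ follows by time reversal.
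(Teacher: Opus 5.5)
Your proposal is correct and follows essentially the same route as the paper's proof. The shared steps are: the local holomorphic section $w(k)$; elimination of $E_{\bsl{k}}$; complexification into independent $Y,W$, with a Vandermonde-type separation of the finitely many exponentials in $W$; constancy of $\lambda_{\bsl\rho}$ from the finite spectrum of $A_{\bsl\rho}$; and the resulting degeneracy contradiction. The differences are cosmetic: you apply the identity theorem on the totally real slice directly where the paper iterates the one-variable version, and you identify the common eigenvalue by evaluating at $\bsl{k}_1$ rather than via $E_{\bsl k}=\sum_{\bsl\rho}e^{-\ii\bsl k\cdot\bsl\rho}\lambda_{\bsl\rho}$. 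Your ``$W=\bar Y^{*}$'' should read $W=\overline{Y}$.
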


\begin{proof}
Assume, for contradiction,
that $h(\bsl{k})$ has an isolated Chern-ideal band
$E_{\bsl k}$ with $\Ch>0$.
Since the band is Chern-ideal with $\Ch>0$,
we have 
\eq{
\int_{\BZ}d^2k
\operatorname{Tr}g(\bsl{k})
=
2\pi\Ch
> 0\ ,
}
which means there exists $\bsl{k}_0\in\BZ$ such that $\operatorname{Tr}g(\bsl{k}_0)>0$.
By continuity, we can choose a small open \emph{square} neighborhood $\mathcal U $ of $\bsl{k}_0$, \ie, $\mathcal U = \{ (k_x , k_y) \mid k_x \in  (-2\epsilon+k_{0,x}, 2\epsilon+k_{0,x}), k_y \in (-2\epsilon+k_{0,y}, 2\epsilon+k_{0,y})\}$, for some small positive $\epsilon$, and $\operatorname{Tr}g(\bsl{k})>0$ on $\mathcal U$.
Recall that the ideality gives $\left(\partial_{k_x}-i\partial_{k_y}\right)U_{\bsl{k}}
=\lambda_{\bsl{k}} U_{\bsl{k}}$ [\cref{eq:partial_z_U}] on $\mathcal U$.
Since $U_{\bsl{k}}$ is normalized, we can pick $\mathcal U$ small enough such that one component $U_{\bsl{k},a}$ is everywhere nonzero on $\mathcal U$.
Then, we have
\eq{
\left(\partial_{k_x}-\ii\partial_{k_y}\right)\frac{U_{\bsl{k}}}{U_{\bsl{k},a}} = 0\ ,
}
which means 
\eq{
\frac{U_{\bsl{k}}}{U_{\bsl{k},a}} = w(k)
}
with $k=k_x - \ii k_y$ is holomorphic on $\mathcal{U}$.
Explicitly, $w(k)$ has the form
\eq{
w(k)=
\begin{pmatrix}
w_1(k)\\
\vdots\\
w_{a-1}(k)\\
1\\
w_{a+1}(k)\\
\vdots\\
w_N(k)
\end{pmatrix},
\label{eq:Nband_w_form}
}
where all components $w_{\alpha}(k)$ are holomorphic on the corresponding neighborhood of $\mathcal U$, and $w_a(k)=1$.
Note that here and from now on, we use $\mathcal U$ to denote both the neighborhood of $(k_x,k_y)$ in $\mathbb{R}^2$, as well as the neighborhood of $k_x-\ii k_y$ in $\mathbb{C}$, for simplicity.
Because $w(k)$ differs from $U$ only by a nonzero scalar factor, it is still an eigenvector of $h(\bsl k)$:
\eq{
h(\bsl k) w(k)=E_{\bsl k} w(k).
\label{eq:Nband_Hw_Ew}
}

Let $S_a=\{1,\ldots,N\}\setminus\{a\}.$ Since $w_a(k)=1$, the $a$-th component of
\cref{eq:Nband_Hw_Ew} gives
\eq{
[h(\bsl k)w(k)]_a=E_{\bsl k}.
\label{eq:Nband_a_component}
}
For $\alpha \in S_a$, the $\alpha$-th component gives
\eq{
[h(\bsl k)w(k)]_{\alpha}=E_{\bsl k}w_{\alpha}(k).
\label{eq:Nband_alpha_component}
}
Eliminating $E_{\bsl k}$ between \cref{eq:Nband_a_component} and
\cref{eq:Nband_alpha_component}, we obtain
\eq{
[h(\bsl k)w(k)]_{\alpha}
-
w_{\alpha}(k)[h(\bsl k)w(k)]_a
=
0,
\qquad
\alpha \in S_a.
\label{eq:Nband_eliminate_E}
}
Now insert the finite-range expansion $h(\bsl k)=\sum_{\bsl{\rho}\in\mathcal{D}}e^{-i\bsl \rho\cdot\bsl k}A_{\bsl \rho}.$
Then \cref{eq:Nband_eliminate_E} becomes
\eq{
\sum_{\bsl{\rho}\in\mathcal{D}}e^{-i\bsl \rho\cdot\bsl k}
\left(
[A_{\bsl \rho}w(k)]_{\alpha}
- w_{\alpha}(k)[A_{\bsl \rho}w(k)]_a
\right)
= 0,
\qquad
\alpha \in S_a.
\label{eq:Nband_sum_CR}
}
Define
\eq{
C_{\alpha,\bsl \rho}(k)
\equiv
[A_{\bsl \rho}w(k)]_{\alpha}
-
w_{\alpha}(k)[A_{\bsl \rho}w(k)]_a,
\qquad
\alpha \in S_a.
\label{eq:Nband_C_def}
}
Since $w(k)$ is holomorphic on $\mathcal{U}$ and $A_{\bsl \rho}$ is constant, each
$C_{\alpha,\bsl \rho}(k)$ is holomorphic on $\mathcal{U}$.
\cref{eq:Nband_sum_CR} becomes
\eq{
\sum_{\bsl{\rho}\in\mathcal{D}}e^{-i\bsl \rho \cdot\bsl k}C_{\alpha,\bsl \rho}(k)=0,
\qquad
\alpha \in S_a.
\label{eq:Nband_sum_C}
}

We want to show each element of the summation in \cref{eq:Nband_sum_C} is zero.
First, since $C_{\alpha,\bsl \rho}(k)$ is a holomorphic function of $k$ on $\mathcal{U}$, it admits a convergent power series expansion about any $k' \in \mathcal{U}$,
\eqa{
C_{\alpha,\bsl{\rho}}(k) & = \sum_{n=0}^{\infty}c_{\alpha,\bsl{\rho},n}(k-k')^n \\
& = \sum_{n=0}^{\infty}c_{\alpha,\bsl{\rho},n}\bigl[ (k_x-k_{x}') - \ii(k_y-k_{y}') \bigr]^n \\
& = \sum_{n=0}^{\infty}c_{\alpha,\bsl{\rho},n} \sum_{m=0}^{n} \binom{n}{m} (-\ii)^{n-m} (k_x-k_{x}')^m (k_y-k_{y}')^{n-m}.}
Thus, $C_{\alpha,\bsl \rho}(k)$ is a real-analytic function of $(k_x,k_y)$ on $\mathcal U$.
Since the factors $e^{-i\bsl \rho \cdot\bsl k}$ are also real-analytic in
$(k_x,k_y)$, and the sum over $\bsl \rho \in \mathcal{D}$ is finite, the left-hand
side of \cref{eq:Nband_sum_C} is real-analytic on $\mathcal U$, \ie, the following function is real-analytic on $\mathcal U $:
\eq{
\Gamma_\alpha(k_x,k_y) = \sum_{\bsl{\rho}\in\mathcal{D}}e^{-i\bsl \rho \cdot\bsl k}C_{\alpha,\bsl \rho}(k)
}
The real-analytical allows a holomorphic extension \eq{
\Gamma_\alpha(K_x,K_y) = \sum_{\bsl{\rho}\in\mathcal{D}}e^{-i \rho_x K_x}  e^{-i \rho_y K_y} C_{\alpha,\bsl \rho}(K_x - \ii K_y)
}
with $(K_x,K_y)$ in an open subset of $\dsC^2$ that contains $\mathcal{U}$.
Then, there exists a small enough $\mu>0$ such that 
\eq{
\mathcal{W} = \{(k_x+\ii \kappa_x,k_y+\ii \kappa_y)|k_x\in I_x \equiv (k_{0,x}-\epsilon,k_{0,x}+\epsilon), k_y\in I_y \equiv (k_{0,y}-\epsilon,k_{0,y}+\epsilon), \kappa_x \in (-\mu,\mu) , \kappa_y \in (-\mu,\mu) \}
\label{eq:KxKy}
}
is contained in that open subset---$\Gamma_\alpha(K_x,K_y)$ is holomorphic on the open $\mathcal{W}\subset \dsC^2$.
Since $\Gamma_\alpha(k_x,k_y) = 0$ on the open $\mathcal{U}\subset \dsR^2$, the identity theorem implies
\eq{
\label{eq:Nband_complexified_sum}
\Gamma_\alpha(K_x,K_y) = \sum_{\bsl{\rho}\in\mathcal{D}}e^{-i \rho_x K_x}  e^{-i \rho_y K_y} C_{\alpha,\bsl \rho}(K_x - \ii K_y) = 0
}
for $(K_x,K_y)\in \mathcal{W}\subset \dsC^2$.
To see that, first note that $\Gamma_\alpha(K_x,k_y)$ is a holomorphic function of $K_x \in I_x + \ii (-\mu,\mu)\subset \dsC$.
Then, since $\Gamma_\alpha(k_x,k_y) = 0$ for all $k_x\in I_x$, the one-variable identity theorem implies $\Gamma_\alpha(K_x,k_y) = 0$ for all $K_x \in I_x + \ii (-\mu,\mu)$, as $I_x$ certainly contains accumulation points.
Then, we know $\Gamma_\alpha(K_x,k_y) = 0 $ for all $K_x \in I_x + \ii (-\mu,\mu)$ and all $k_y\in I_y$.
Now consider a fixed $K_x \in I_x + \ii (-\mu,\mu)$, and then $\Gamma_\alpha(K_x,K_y)$ is a holomorphic function of $K_y\in I_y + \ii (-\mu,\mu)$, which is also an open subset of $\dsC$.
Then, for the fixed $K_x$, the one-variable identity theorem implies  $\Gamma_\alpha(K_x,K_y) = 0$ for all $K_y\in I_y + \ii (-\mu,\mu)$, since $\Gamma_\alpha(K_x,k_y) = 0$ for $k_y\in I_y$ and $I_y$ contains accumulation points.
Since the choice of the fixed $K_x$ is arbitrary, we know 
$\Gamma_\alpha(K_x,K_y) = 0$ for all $(K_x,K_y)\in \mathcal{W}$, thus
\cref{eq:Nband_complexified_sum}.

An alternative way to see this is the following.
First we know $\Gamma_\alpha(k_x,k_y)$ is real-analytic near any $\bsl k'\in \mathcal{U}$----it has a
convergent expansion $\sum_{p,q\ge0}
c_{pq}
(k_x-k_{x}')^p
(k_y-k_{y}')^q.$
But this function is zero on $\mathcal{U}$, so all real
partial derivatives at $\bsl k'$ vanish. Hence the coefficients $c_{pq}
=
\frac{1}{p!q!}
\left[\partial_{k_x}^p\partial_{k_y}^q
\Gamma_\alpha(k_x,k_y)
\right]_{\bsl k=\bsl k'}
=
0$. Therefore, the holomorphic extension in $K_x$ and $K_y$, 
$\Gamma_\alpha(K_x,K_y)$, 
which can be obtained by the same power series, also vanishes in a neighborhood of $\bsl k'$. Therefore,
\cref{eq:Nband_complexified_sum} is true.

Next, express $\bsl \rho\cdot\bsl k$ in the coordinates $k,{k^*}$ and $r_{\bsl{\rho}},r^*_{\bsl{\rho}}$ 
\eq{
k=k_x-\ii k_y,
\qquad
{k^*}=k_x+\ii k_y,
\qquad
r_{\bsl{\rho}} = \frac{\rho_x+\ii\rho_y}{2},
\qquad
r^*_{\bsl{\rho}} = \frac{\rho_x-\ii\rho_y}{2}.
}
Then,
\eq{
\bsl \rho\cdot\bsl k =
\rho_xk_x+\rho_yk_y =
\frac{\rho_x+\ii\rho_y}{2} k
+
\frac{\rho_x-\ii\rho_y}{2} {k^*} =
r_{\bsl{\rho}}k
+
r^*_{\bsl{\rho}}{k^*}.
}
Therefore, \cref{eq:Nband_sum_C} becomes
\eq{
\sum_{\bsl{\rho}\in\mathcal{D}}
e^{-\ii r_{\bsl{\rho}}k}
e^{-\ii r^*_{\bsl{\rho}}{k^*}}
C_{\alpha,\bsl \rho}(k)
=
0,
\qquad
\alpha\in S_a.
\label{eq:Nband_sum_k_kbar}
}
Furthermore, recall \cref{eq:KxKy}, and now define
\eq{
Y=K_x-\ii K_y = k_x + \kappa_y- \ii k_y + \ii \kappa_x ,
\qquad
W=K_x+\ii K_y =  k_x - \kappa_y + \ii k_y + \ii \kappa_x.
\label{eq:Nband_Y_W}
}
Then \cref{eq:Nband_sum_k_kbar} is
\eq{
\sum_{\bsl{\rho}\in\mathcal{D}}
e^{-ir_{\bsl{\rho}}Y}
e^{-ir^*_{\bsl{\rho}}W}
C_{\alpha,\bsl \rho}(Y)
=
0,
\qquad
\alpha \in S_a, (K_x,K_y)\in \mathcal{W}\ .
}
Note that as $K_x$ and $K_y$ are independent, $Y$ and $W$ are independent complex variables.

In particular, for any $k'=k_x'-\ii k_y'$ with $(k_x',k_y')\in I_x \times I_y$, and fix $Y=k'$ we have 
\eq{
\sum_{\bsl{\rho}\in\mathcal{D}}
\left[
e^{-ir_{\bsl{\rho}}k'}C_{\alpha,\bsl \rho}(k')
\right]
e^{-ir^*_{\bsl{\rho}}W} = 0
\label{eq:Nband_exponential_sum}
}
that holds for any value of $W = k_x'  + \ii k_y' - 2 \kappa_y+2\ii \kappa_x$ in $ (k_x'-2 \mu'_x, k_x'+2\mu'_x)  + \ii (k_y'-2 \mu'_y, k_y'+2\mu'_y)$, where
\eqa{
& \mu'_x = \min(k_y'+\epsilon-k_{0,y},k_{0,y}+\epsilon-k_y',\mu) > 0  \\
& \mu'_y = \min(k_x'+\epsilon-k_{0,x},k_{0,x}+\epsilon-k_x',\mu) > 0  \ .
}
The $r^*_{\bsl{\rho}}$'s are distinct for distinct $\bsl \rho$. Since a finite number of exponential functions of $W$ with
distinct$\bsl \rho$ are linearly independent, every coefficient in
\cref{eq:Nband_exponential_sum} must vanish, leading to 
\eq{e^{-ir_{\bsl{\rho}}k'}C_{\alpha,\bsl \rho}(k')=0\ ,\forall k'\in I_x + \ii I_y, \alpha \in S_a, \bsl{\rho} \in \mathcal{D}\ .
\label{eq:coefficient_vanish}
}

Let us show \cref{eq:coefficient_vanish} explicitly. Write $\mathcal{D}=\{\bsl \rho_1,\ldots,\bsl \rho_n\},
\quad
\gamma_j(k')\equiv e^{-i r_{\bsl \rho_j}k'}C_{\alpha,\bsl \rho_j}(k'),
\quad
r^*_j\equiv r^*_{\bsl \rho_j}.$
For fixed $k'$, \cref{eq:Nband_exponential_sum} becomes
\eq{
G_{k'}(W)\equiv \sum_{j=1}^n \gamma_j(k')e^{-i r^*_j W}=0.
\label{eq:Nband_exp_sum_fixed_Y}
}
Since $G_{k'}(W)$ vanishes identically as a local holomorphic function of $W$ in $ (k_x'-2 \mu'_x, k_x'+2\mu'_x)  + \ii (k_y'-2 \mu'_y, k_y'+2\mu'_y)$, all of its derivatives vanish. Taking $\ell=0,1,\ldots,n-1$ derivatives with respect to $W$ and evaluating at $W_0=k_{x}'+\ii k_{y}'=k'^*$, we obtain
\eq{
0=
\left.\frac{d^\ell G_{k'}}{dW^\ell}\right|_{W=W_0}
=
\sum_{j=1}^n (-i r^*_j)^\ell \gamma_j(k') e^{-ir^*_{j}k'^*},
\qquad
\ell=0,1,\ldots,n-1.
\label{eq:Nband_vandermonde_system}
}
Equivalently,
\eq{ 
\begin{pmatrix}
1 & 1 & \cdots & 1\\
-\ii r^*_1 & -\ii r^*_2 & \cdots & -\ii r^*_n\\
(-\ii r^*_1)^2 & (-\ii r^*_2)^2 & \cdots & (-\ii r^*_n)^2\\
\vdots & \vdots & \ddots & \vdots\\
(-\ii r^*_1)^{n-1} & (-\ii r^*_2)^{n-1} & \cdots &
(-\ii r^*_n)^{n-1}
\end{pmatrix}
\begin{pmatrix}
\gamma_1(k')e^{-\ii r_1^*k'^*}\\
\gamma_2(k')e^{-\ii r_2^*k'^*}\\
\gamma_3(k')e^{-\ii r_3^*k'^*}\\
\vdots\\
\gamma_n(k')e^{-\ii r_n^*k'^*}
\end{pmatrix}
=
\begin{pmatrix}
0\\
0\\
0\\
\vdots\\
0
\end{pmatrix}.
\label{eq:Nband_vandermonde_matrix}
}
The matrix in \cref{eq:Nband_vandermonde_matrix} is a Vandermonde matrix, whose determinant is $\prod_{1\le p<q\le n}
\left[
(-\ii r^*_q)-(-\ii r^*_p)
\right]
\neq0$, as the $ r^*_j$'s are distinct. Therefore, the only solution is $\gamma_j(k')e^{-\ii r_j^*k'^*}=0$ for all $j=1,\ldots,n.$ Since $e^{-\ii r_j^*k'^*}\neq0$, this implies
$\gamma_j(k')=0$ for all $j=1,\ldots,n$.
Namely,
$e^{-ir_{\bsl{\rho}}k'}C_{\alpha,\bsl \rho}(k')=0,$ which is exactly \cref{eq:coefficient_vanish}.
Note that we use the finite property of  $\mathcal{D}$ in this step. For exponentially decaying hopping, $\mathcal{D}$ has infinite elements, so the above argument is not necessarily generalizable to guarantee $e^{-ir_{\bsl{\rho}}k}C_{\alpha,\bsl \rho}(k)$ is zero for each $\bsl{\rho}$ in this case.

Using the definition of $C_{\alpha,\bsl \rho}$ in \cref{eq:Nband_C_def}, this means
\eq{
[A_{\bsl \rho}w(k)]_{\alpha}
=
w_{\alpha}(k)[A_{\bsl \rho}w(k)]_a,
\qquad
\alpha \in S_a.
\label{eq:Nband_component_relation_AR}
}
Define $\lambda_{\bsl \rho}(k)\equiv [A_{\bsl \rho}w(k)]_a.$ Since $w_a(k)=1$, the same relation also holds for $\alpha=a$. Therefore, for every component $\alpha=1,\ldots,N$,
\eq{
[A_{\bsl \rho}w(k)]_{\alpha}
=
\lambda_{\bsl \rho}(k)w_{\alpha}(k).
\label{eq:Nband_all_components}
}
Equivalently,
\eq{
A_{\bsl \rho}w(k)=\lambda_{\bsl \rho}(k)w(k),
\qquad
\forall \bsl \rho \in \mathcal{D}.
\label{eq:Nband_key_ARw}
}
In addition, \cref{eq:Nband_a_component} shows that
\eq{
\label{eq:E_k_lambda}
E_{\bsl{k}} = [h(\bsl k)w(k)]_a =  \sum_{\bsl{\rho}\in\mathcal{D}}e^{-i \bsl \rho\cdot \bsl k}  \lambda_{\bsl \rho}(k) [w(k)]_a = \sum_{\bsl{\rho}\in\mathcal{D}}e^{-i \bsl \rho\cdot \bsl k}  \lambda_{\bsl \rho}(k)\ .
}

Now we use the fact that the number of bands $N$ is finite, which means that $A_{\bsl\rho}$ is an $N\times N$ matrix.
The $a$-th component of \cref{eq:Nband_key_ARw} gives
\eq{
\lambda_{\bsl\rho}(k) =
[A_{\bsl\rho}w(k)]_a
=
\sum_{\alpha'=1}^{N}
[A_{\bsl\rho}]_{a\alpha'}
w_{\alpha'}(k)
=
[A_{\bsl\rho}]_{aa}
+
\sum_{\alpha'\neq a}
[A_{\bsl\rho}]_{a\alpha'}
w_{\alpha'}(k),
\label{eq:Nband_lambda_holomorphic}
}
where $w_a(k)=1$.
Thus, $\lambda_{\bsl \rho}(k)$ is locally holomorphic  in $k$. Moreover,
\cref{eq:Nband_key_ARw} says that $\lambda_{\bsl \rho}(k)$ is an eigenvalue of the
fixed $N\times N$ matrix $A_{\bsl \rho}$. A fixed $N\times N$ matrix has at most $N$ eigenvalues, so $\lambda_{\bsl \rho}(k)$ takes values in a finite set. Since $\lambda_{\bsl \rho}(k)$ is holomorphic on the connected neighborhood $\mathcal{U}$ and takes values in
a finite set, it must be constant. Hence, $\lambda_{\bsl \rho}(k)=\lambda_{\bsl \rho}$ is independent of $k$ and
\eq{
A_{\bsl \rho} w(k)=\lambda_{\bsl \rho} w(k),
\qquad
\forall k\in \mathcal{U},
\quad
\forall \bsl \rho \in \mathcal{D}.
\label{eq:Nband_ARw_constant_lambda}
}
We now show that $w(k)$ also has to be constant.
For contradiction, assume that $w(k)$ is nonconstant
on $\mathcal U$.
Then there exist $k_1,k_2\in \mathcal U$ such that $w(k_1)\neq w(k_2)$. since $w_a(k_1)=w_a(k_2)=1$, $w(k_1)$ and $w(k_2)$ are linearly
independent.
Define $L=\operatorname{span}\{w(k) \mid k\in \mathcal U\}\subseteq \mathbb C^N.$
Since $\dim\operatorname{span}\{w(k_1),w(k_2)\}=2$, then $\dim L\ge2.$
\Cref{eq:Nband_ARw_constant_lambda} and linearity imply
\eq{
A_{\bsl \rho} v=\lambda_{\bsl \rho} v,
\qquad
\forall v\in L,
\quad
\forall \bsl \rho \in \mathcal{D}.
\label{eq:Nband_AR_on_L}
}
Thus, $L$ is a common eigensubspace, or simultaneous eigensubspace, of all hopping
matrices $A_{\bsl \rho}$.
Now apply $h(\bsl k)$ to any vector $v\in L$:
\eq{
h(\bsl k) v
=
\sum_{\bsl{\rho}\in\mathcal{D}}e^{-i \bsl \rho\cdot \bsl k}A_{\bsl \rho} v.
\label{eq:Nband_H_on_L_1}
}
Using \cref{eq:Nband_AR_on_L},
\eq{
h(\bsl k) v
=
\sum_{\bsl{\rho}\in\mathcal{D}}e^{-i \bsl \rho \cdot \bsl k}\lambda_{\bsl \rho} v.
\label{eq:Nband_H_on_L_2}
}
Combined with \cref{eq:E_k_lambda}, we have
\eq{
h(\bsl k)v=E_{\bsl k}v,
\qquad
\forall v\in L.
\label{eq:Nband_H_degenerate}
}
Since $\dim L\ge2$, \cref{eq:Nband_H_degenerate} implies that $h(\bsl k)$ has an
eigenspace of dimension at least two at energy $E_{\bsl k}$, contradicting that the target band is a single isolated band. 
Therefore, $w(k)$ must be constant on $\mathcal U$.

However, a constant $w(k)$ would make
$P_{\bsl k}$ constant on $\mathcal U$, and therefore $\operatorname{Tr}g(\bsl{k}_0)=0$.
This contradicts the choice
$\operatorname{Tr}g(\bsl{k}_0)>0$. 

In conclusion, no finite-range
finite-band tight-binding Hamiltonian can realize an isolated
Chern-ideal band with $\Ch>0$.
The same conclusion holds for $\Ch<0$ which can be derived by applying a time-reversal operation to the above derivations.
\end{proof}

\subsection{Critical Chern-ideal bands}

The proof in \cref{subsec:Isolated Chern Ideal Bands} assumes that the target band is isolated over the BZ.
This global isolation assumption guarantees that the rank-one projector is real-analytic everywhere over the BZ.
We now allow band touchings.
In this case, the projector may remain continuous and has finite Berry curvature over the BZ, while being real-analytic away from the touching points.
This is the critical topological band proposed in \refcite{li2026stabletopologyexactlyflat}, which was demonstrated to exist in finite-band finite-range tight-binding model.
We now show that even such critical topological band cannot be Chern ideal with $|\Ch|>0$ in finite-band finite-range tight-binding models.

\begin{theorem}
\label{thm:finite_range_critical_no_go}
A finite-band tight-binding Hamiltonian with finite-range hoppings cannot realize a critical Chern-ideal band with $|\Ch|>0$, as long as the band touching points form a measure-zero set in $\BZ$ and the Berry curvature is finite over $\BZ$.
\end{theorem}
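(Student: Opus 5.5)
We reduce to the local argument already used in the proof of \cref{thm:finite_range_isolated_no_go}; that argument never used global isolation, only what happens near one good point $\bsl{k}_0$. Assume, for contradiction, that the band has $\Ch>0$ (the case $\Ch<0$ follows by time reversal). Let $\mathcal{T}\subset\BZ$ be the touching set, which has measure zero. The Berry curvature is finite everywhere, so $F_{\bsl{k}}$ is integrable and $2\pi\Ch=\int_{\BZ}F_{\bsl{k}}\,d^2k=\int_{\BZ\setminus\mathcal{T}}F_{\bsl{k}}\,d^2k$. Hence there is some $\bsl{k}_0\notin\mathcal{T}$ with $F_{\bsl{k}_0}>0$, and by ideality $\Tr[g_{\bsl{k}_0}]=F_{\bsl{k}_0}>0$.

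I will also need $\bsl{k}_0$ to have an open neighborhood avoiding $\mathcal{T}$. If $\mathcal{T}$ is closed, which is the natural reading since isolated eigenvalues form an open set, then $\BZ\setminus\mathcal{T}$ is open. The set $\{F>0\}\cap(\BZ\setminus\mathcal{T})$ then has positive measure, and I can pick $\bsl{k}_0$ together with a square neighborhood $\mathcal{U}\subset\BZ\setminus\mathcal{T}$.

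On $\mathcal{U}$ the eigenvalue $E_{\bsl{k}}$ is isolated at every point. The Bloch Hamiltonian $h(\bsl{k})$ is a finite trigonometric sum, hence real-analytic. By \cref{thm:P_smooth_at_Isolated}, $P_{\bsl{k}}$ is therefore real-analytic on $\mathcal{U}$. Shrinking $\mathcal{U}$ if necessary, I choose a real-analytic gauge $U_{\bsl{k}}$ and a component $U_{\bsl{k},a}$ that does not vanish on $\mathcal{U}$. The pointwise ideality condition $\Tr[g_{\bsl{k}}]=F_{\bsl{k}}$ on $\mathcal{U}$ gives $(1-P_{\bsl{k}})(\partial_{k_x}-\ii\partial_{k_y})P_{\bsl{k}}=0$ there, by the same derivation as in \cref{app:no_ch_1_periodic}. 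This makes $w(k)=U_{\bsl{k}}/U_{\bsl{k},a}$ holomorphic on $\mathcal{U}$.

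From here I repeat the proof of \cref{thm:finite_range_isolated_no_go} verbatim on $\mathcal{U}$:
\begin{itemize}
\item Eliminating $E_{\bsl{k}}$ gives the finite sum $\sum_{\bsl\rho\in\mathcal{D}}e^{-\ii\bsl\rho\cdot\bsl k}C_{\alpha,\bsl\rho}(k)=0$.
\item Complexifying to independent variables $(Y,W)$ and using the Vandermonde argument yields $C_{\alpha,\bsl\rho}=0$ for every $\alpha$ and $\bsl\rho$.
\item Hence $A_{\bsl\rho}w(k)=\lambda_{\bsl\rho}(k)w(k)$. Each $\lambda_{\bsl\rho}$ is holomorphic and takes values in the finite spectrum of $A_{\bsl\rho}$, so it is constant.
\item If $w$ were nonconstant, the span $L$ of its values would have $\dim L\ge2$ and would lie in the $E_{\bsl{k}}$-eigenspace of $h(\bsl{k})$. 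This contradicts isolation of $E_{\bsl{k}}$ at points of $\mathcal{U}$. So $w$ is constant, which forces $\Tr[g_{\bsl{k}_0}]=0$, contradicting $\Tr[g_{\bsl{k}_0}]>0$.
\end{itemize}

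The main obstacle is the first step: making sure that a point with $F_{\bsl{k}_0}>0$ can be found with a full open neighborhood on which the band is nondegenerate. This is where finiteness of the Berry curvature enters. If $F$ were instead concentrated as a delta function at touching points, the Chern number could be carried entirely by $\mathcal{T}$, and no such $\bsl{k}_0$ would exist.

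If $\mathcal{T}$ is merely measure zero and not assumed closed, I would argue as follows. The degeneracy set is the zero set of the discriminant of $\det[z-h(\bsl{k})]$ restricted to the branch $E_{\bsl{k}}$, which is a real-analytic condition. A real-analytic zero set either has measure zero and is closed, or it is the whole domain. This supplies the needed openness of $\BZ\setminus\mathcal{T}$.
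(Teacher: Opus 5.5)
Your proposal is correct and follows essentially the same route as the paper. Like the paper, you pick a nondegenerate point $\bsl{k}_0$ with $\Tr[g_{\bsl{k}_0}]=F_{\bsl{k}_0}>0$, which exists because $F$ is finite and the touching set has measure zero. You then rerun the local argument of \cref{thm:finite_range_isolated_no_go} on a neighborhood where the band is isolated.

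Your explicit check that such a neighborhood exists is a detail the paper leaves implicit. The reason you give first already suffices: isolation is an open condition by continuity of $E_{\bsl{k}}$, so the touching set is closed. The discriminant fallback in your last paragraph is unnecessary and not quite right as stated. The full discriminant can vanish identically because of degeneracies among other bands. The condition restricted to the branch is only continuous, not analytic, since $E_{\bsl{k}}$ need not be analytic at touching points. Continuity still gives closedness, though, so this does not affect your argument.
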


\begin{proof}

Since time-reversal changes the sign of the Chern number, we consider the case $\Ch>0$ without loss of generality.
Let $\Sigma\subset \BZ$ be the set of $\bsl{k}$-points where band touching occurs.
Then define
\eq{
M=\BZ \setminus \Sigma
\label{eq:critical_M_def}
}
and $E(\bsl{k})$ is nondegenerate for $\bsl{k}\in M$.
Since a finite-range hopping Bloch Hamiltonian is real-analytic,
\Cref{thm:P_smooth_at_Isolated}
in \Cref{app: basics} imply that
$P_{\bsl k}$ is real-analytic, and hence smooth, on $M$.
Since $\Ch>0$, the Berry curvature is finite for all $\bsl{k}$ and $\Sigma$ has measure zero, there must exist a 
$\bsl k_0\in M$ such that
\eq{
\operatorname{Tr}g(\bsl k_0)=F_{\bsl{k_0}}>0.
}
In a small enough neighborhood of $\bsl{k}_0$, the ideal condition implies that we can construct a local holomorphic eigenvector $w(k)$ as in \cref{eq:w_k}.
Then, by the same logic of the discussion following \cref{eq:Nband_ARw_constant_lambda},
either $w(k)$ is constant in the neighborhood and hence $\operatorname{Tr}g(\bsl k_0)=0$, or $E(\bsl{k})$ is degenerate in the neighborhood.
This leads to a contradiction.
Therefore, a Hamiltonian with finite-range hoppings cannot realize a critical Chern-ideal band with $|\Ch|>0$.
\end{proof}
\end{document}